\newtheorem{conj}{Conjecture}
\newtheorem{thm}{Theorem}
\newtheorem{prop}{Proposition}
\newtheorem{defn}{Definition}
\newtheorem{cor}{Corollary}
\newtheorem{lem}{Lemma}
\def\wt{\widetilde}
\def\wh{\widehat}
\title{Initial data for rotating cosmologies}
\author{Piotr Bizo\'n$^{1}$, Stefan Pletka$^{2}$ and Walter Simon$^{2}$}
 \date{${}^{1}$ Institute of Physics, Jagiellonian University, Cracow, Poland\\${}^{2}$Gravitational Physics, Faculty of
 Physics, University of Vienna, Austria}
\begin{document}
\sloppy
\maketitle
\begin{abstract}

We revisit the construction of maximal initial data on compact manifolds in vacuum with positive cosmological
constant via the conformal method. We discuss, extend and apply recent results of Hebey et al. 
\cite{HPP} and Premoselli \cite{BP} which yield existence, non-existence, (non-)uniqueness and (linearisation-) 
stability of solutions of the Lichnerowicz equation, depending on
its coefficients. 
We then focus on so-called $(t,\varphi)$-symmetric data as ``seed manifolds'', and in particular on 
Bowen-York data on the round hypertorus $\mathbb{S}^2 \times \mathbb{S}$ (a slice of Nariai)  and
on Kerr-de Sitter.
In the former case, we clarify the bifurcation structure of the axially
symmetric solutions of the Lichnerowicz equation in terms of the angular 
momentum as bifurcation parameter, using a combination of analytical and numerical techniques.
As to the latter example, we show how dynamical data can be constructed
in a natural way via conformal rescalings of  Kerr-de Sitter data.

\end{abstract}

\section{Introduction}

We start with two definitions.

\begin{defn}
\label{ids}
As {\bf Initial Data (ID)} $(\wt {\cal M}, \wt g_{ij}, \wt K_{ij})$
(i,j,=1,2,3) for vacuum with positive cosmological constant $\Lambda$ we take a compact 3-dim. Riemannian manifold 
$\widetilde {\cal M}$ with smooth metric $\widetilde g_{ij}$ and smooth second fundamental form $\widetilde K_{ij}$
which is maximal  $\widetilde g^{ij} \widetilde K_{ij} = 0$ and satisfies the constraints 
\begin{equation}
\label{con}
\widetilde R   = \widetilde K_{ij} \widetilde K^{ij} + 2\Lambda \qquad
\widetilde \nabla_i \widetilde K^{ij} = 0
\end{equation}
Here $\widetilde \nabla$ and $\widetilde R$ are the covariant derivative and
the scalar curvature of $\widetilde g_{ij}$.
\end{defn}
\begin{defn}
\label{seed}
A {\bf Seed Manifold (SM)} $({\cal M}, g_{ij}, K_{ij})$ consists of a compact
3-dim. manifold $({\cal M},g_{ij})$ with smooth metric in the positive Yamabe class, 
and of a smooth trace-free and divergence-free tensor $K_{ij}$ on ${\cal M}$.
\end{defn} 

The  conformal method is the art of turning a SM into an ID via conformal  rescaling
\cite{JIM}.
 In the present setting it remains to be shown that the Lichnerowicz equation

\begin{equation}
\label{phi}
 \Phi(\phi):=  - \left( \Delta - \frac{1}{8} R \right) \phi - \frac{1}{4}\Lambda \phi^5 - \frac{\Omega^2}{8 \phi^{7}} = 0  
\end{equation}
has a smooth, strictly positive solution $\phi$, where $\Delta$ and $R$ are the
Laplacian and the scalar curvature of $g_{ij}$, and $\Omega^2 = K^{ij}K_{ij}$.
 In this case the  ``physical'' quantities 
\begin{equation}
\label{ct}
\widetilde g_{ij} = \phi^4 g_{ij} \qquad \widetilde K^{ij} = \phi^{-10} K^{ij}
\end{equation}
 indeed satisfy the constraints (\ref{con}).

In view of the observed small positive value of $\Lambda$, and due to the naturality of the assumption of maximality of the data, 
we are dealing here with a physically very realistic case of the Lichnerowicz equation.  
It is precisely this case, however, which involves rather intricate mathematical
 problems.  Firstly, solutions definitely do {\it not} exist for {\it large} 
$\Omega^2$ which is rather easy to see in principle either from the maximum principle, or by integrating (\ref{phi}). 
On the other hand, existence proofs for {\it small} $\Omega^2$ are subtle, in
particular when  $\Omega^2$ is allowed to have zeros \cite{EH,HPP,HV,BP}. 
However, in physically meaningful situations $\Omega^2$ does have zeros - in the axially symmetric
(AS)  case, on which we focus in this paper and which is simple in other respects, 
$\Omega^2$ in fact typically vanishes on the axis (cf. Sect. 3).

There are now available two types of general existence and non-existence results which cover 
the case of present interest. The first one, due to Hebey et al. \cite{HPP} (see also \cite{EH,HV}) guarantees existence 
of solutions if $\int_{\cal M}\Omega^2$ is small, and proves
non-existence if $\int_{\cal M} \Omega^{5/6}$ is large. In either case, the bounds can be given
explicitly in terms of the Yamabe constant of ${\cal M}$ and other integrals over
${\cal M}$. However, there is a $\Omega$-``gap''  which is not covered by these
results. In the second theorem, due to Premoselli \cite{BP}, $\Omega$ is written
as $\Omega = b \, \Omega_0$ for some (fixed) function $\Omega_0$ and (variable) 
constant $b > 0$, and the result is ``gap free'': It is asserted that there is a 
constant $b_{\ast} \in (0,\infty)$ such that (\ref{phi}) has at least two positive solutions for all 
$b < b_{\ast}$, a unique solution for $b = b_{\ast}$ and no solution for $b >
b_{\ast}$.
Moreover,  for  every $b \le b_{\ast}$ there is a unique stable, ``minimal'' solution. 
We remark, however, that in this theorem there is no 
direct information about $b_{\ast}$ in terms of more familiar geometric quantities of ${\cal M}$. 

In this work we start (in Sect. 2.1) with defining (in Def. 3) (linearisation-) 
stability of solutions of (\ref{phi}) and of ID (under conformal deformations), 
which will be key in what follows. 
In particular we prove Proposition 1 which guarantees instability if
$\wt \Omega^2 = \wt K_{ij} \wt K^{ij} < \Lambda$. Another important issue in our work is 
``symmetry-inheriting" versus "symmetry-breaking" of solutions, by which we mean solutions of (\ref{phi})
which share (or don't share) all symmetries of the equation. In Sect. 2.2. we prove 
a simple result (Proposition 2) which ensures symmetry inheritance for stable solutions in the
case of continuous symmetries. We proceed  in Sect. 2.3. by reviewing the theorems of Hebey et al. and
 Premoselli mentioned above. As  small complements to the latter result, 
we clarify (in Proposition 3) how the stable, minimal solutions of
 (\ref{phi}) approach zero
 as $b \rightarrow 0$. Moreover, in Proposition 4 we   
employ an argument from  bifurcation theory to show that near the 
maximal value $b_*$, there are precisely two solutions.
    
The core of our paper is Sect. 3  where we apply the results sketched above
to certain ``$(t,\varphi)$-symmetric data'' as introduced and discussed in
\cite{SD,DLT}. 
There one sets out from an AS ``twist potential'' $\omega$ from
which there is constructed an AS, symmetric, trace-free and divergence free tensor
$K_{ij}$.  The SM constructed in this way ``rotate'' in general, and the (Komar-) angular momentum $J$ of any
selected 2-surface is given directly in terms of the values of $\omega$ on the axis
via  $8\pi J = [\omega(0) - \omega(\pi)]$.  
We focus on two different classes of seed manifolds (${\cal M}, g_{ij}$) as examples:  
In Sect. 3.3. we consider a ``round hypertorus'',  i.e. $\mathbb{S}^2 \times \mathbb{S}^1$ 
with a round $\mathbb{S}^2$. We first review the case without angular momentum 
where the solutions of (\ref{phi}) yield the time symmetric Kottler
(Schwarzschild-de Sitter) data. Then we consider a $K_{ij}$ of ``Bowen-York form''
\cite{RB,BY} as the simplest non-trivial rotating model. 
Applying the results of Hebey et al. \cite{HPP} we find (in Theorem 3) that small angular momenta (compared to
$\Lambda^{-1}$, and taken w.r.t. the $\mathbb{S}^2$ surfaces) guarantee existence of solutions of 
(\ref{phi}) while large ones exclude existence. 
Finally, we apply Premoselli's theorem  \cite{BP}. Combined with auxiliary results
from bifurcation theory, with results on stability and symmetry collected in Sect 2., 
as well as with numerical  methods, we are able to clarify the bifurcation structure of the
axially symmetric solutions in terms of the bifurcation parameter $b = 3J\Lambda/2$: 
Firstly, there is a pair of ``principal'' branches consisting of stable and unstable
solutions all of which inherit the $O(2) \times O(2)$- symmetry of the SM. These branches
emanate at $J = 0$ from the solutions $\phi \equiv 0$ and $\phi \equiv 1$,
respectively, and meet at some marginally stable solution $\phi_*$ 
corresponding to a maximal angular momentum $J_*$. Moreover, off certain points on the unstable principal branch 
there bifurcate  branches which break the O(2)-symmetry along the
$\mathbb{S}$ direction, and which terminate at the Kottler solutions in the limit of 
vanishing angular momentum. We summarize these facts as Conjecture 1, which
also includes the hypothesis that there are no solutions which break the axial symmetry on $\mathbb{S}^2$. 
    
In Sect. 3.4.  we consider as SM the standard maximal slice of Kerr-de Sitter. 
For any fixed $\Lambda$, we take a family of $(t,\varphi)$-symmetric 
data generated by $\omega(r, \theta,\Lambda, J, a, m)  = J \omega_K(r, \theta,\Lambda, a, m)/J_{K}$ 
where $\theta$ and $r$ are ``Boyer-Lindquist'' coordinates, $J$ is the
``true''  angular momentum and $\omega_K$ is the twist potential generating Kerr-de Sitter with angular momentum 
$J_{K} = ma(1 + \Lambda a^2/3)^{-2}$ in terms of its standard parameters $m$ and $a$.  
This example is particularly well suited to illustrate Premoselli's result \cite{BP}: 
Choosing $b = 3 J \Lambda/2$ as above, it trivially implies existence of solutions to (\ref{con}) for all $J \le
J_{K}$. More interestingly, it also shows that Kerr-de Sitter can be ``overspun'' in the sense
that there exist data with $J > J_{K}$ as long as they remain strictly unstable. 
The latter is guaranteed in particular by the criterium $\wt \Omega^2 < \Lambda$ of Sect. 2.1.
mentioned above, but in any case for sufficiently small  $J_{K}$ and small $J - J_{K}$ 
(again compared to $\Lambda^{-1}$). These facts are collected in Theorem 4.  

\section{Stability, symmetry, existence and non-existence}

\subsection{Stability}

In the following discussion we refer to SMs and IDs 
as defined in Defs. 1. and 2.. As a rule the metric and the second fundamental form of
IDs will carry tildes. We note, however, that the SM in our examples  
(Sects 3.3. and 3.4.) trivially satisfy the constraints as well, 
so they are IDs on their own. Hence the task here is actually to generate
non-trivial IDs from trivial ones. 

We first define and discuss here (linearisation-) stability of solutions of (\ref{phi}) 
under conformal deformations, which will be crucial in the following results.
The linearized operator $L_{\phi}$ corresponding to (\ref{phi}) applied to
some function $\gamma$ reads

\begin{equation}
\label{lin}
L_{\phi} \gamma := - \left( \Delta - \frac{R}{8} \right) \gamma - \frac{5 \Lambda}{4}
\phi^4 \gamma + \frac{ 7 \Omega^2}{8 \phi^{8}} \gamma 
\end{equation}

\newpage
\begin{defn}~
\label{stab}
\begin{enumerate}
\item
A {\bf solution $\mathbf{ \phi}$} of (\ref{phi}) on a SM $({\cal M},g_{ij}, K_{ij})$
is called strictly stable, stable, marginally stable,
unstable or strictly unstable if the lowest eigenvalue $\varsigma$ in
(\ref{lin}) at $\phi$ satisfies  $\varsigma > 0$, 
 $\varsigma \ge 0$, $\varsigma = 0$,  $\varsigma \le 0$, or $\varsigma < 0$,
respectively. 
\item {\bf ID} given by Equ. (\ref{con}) are said to have lowest eigenvalue
$\varsigma$ (under conformal deformations) if (\ref{lin}) has this lowest eigenvalue  at $\phi \equiv 1$. 
The ID are called strictly stable if  $\phi \equiv 1$ 
is strictly stable, and analogous definitions for ID apply with the other stability
properties.

\end{enumerate}
\end{defn}

The natural question raised by these definitions is resolved as follows.

\begin{lem} 
\label{stabil}
A strictly stable solution $\phi$ of (\ref{phi}) on a SM
$({\cal M},g_{ij}, K_{ij})$ defines via (\ref{ct}) strictly stable ID
 $(\wt{\cal M},\wt g_{ij}, \wt K_{ij})$. The same applies to the other
stability properties of Definition \ref{stab}. 
\end{lem}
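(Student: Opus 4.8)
The plan is to exploit the conformal covariance of the \emph{entire} Lichnerowicz operator, which reduces the statement to a sign-preservation property for a positively weighted eigenvalue problem. First I would record how the ingredients of (\ref{phi}) transform under (\ref{ct}). Writing $\wt\Phi$ for the operator (\ref{phi}) built from the tilded data $(\wt{\cal M},\wt g_{ij},\wt K_{ij})$, the standard transformation law for the conformal Laplacian $-(\Delta-R/8)$ in three dimensions reads $-(\wt\Delta-\wt R/8)\psi = \phi^{-5}\big[-(\Delta-R/8)(\phi\psi)\big]$. Since lowering indices with $\wt g_{ij}=\phi^4 g_{ij}$ turns $\wt K^{ij}=\phi^{-10}K^{ij}$ into $\wt K_{ij}=\phi^{-2}K_{ij}$, one gets $\wt\Omega^2 = \wt K_{ij}\wt K^{ij}=\phi^{-12}\Omega^2$. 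Inserting these into (\ref{phi}) and matching the powers of $\phi$ in the $\Lambda$- and $\Omega^2$-terms yields the covariance identity
\[
\wt\Phi(\psi) = \phi^{-5}\,\Phi(\phi\psi).
\]
In particular, since $\phi$ solves $\Phi(\phi)=0$, the constant $\psi\equiv 1$ solves $\wt\Phi=0$; this is precisely the assertion that $(\wt{\cal M},\wt g,\wt K)$ satisfies the constraints (\ref{con}), so that the stability of the ID is well defined through Definition \ref{stab}.

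Next I would differentiate this identity at $\psi\equiv 1$. By Definition \ref{stab} the linearisation of $\wt\Phi$ at $\psi\equiv 1$ is the operator $\wt L_1$ of the form (\ref{lin}) built from the tilded data, and differentiating $\wt\Phi(\psi)=\phi^{-5}\Phi(\phi\psi)$ in $\psi$ at $\psi\equiv 1$ along a direction $\gamma$ gives the conjugation relation
\[
\wt L_1 \gamma = \phi^{-5}\,L_\phi(\phi\,\gamma),
\]
with $L_\phi$ the linearised operator (\ref{lin}) on the SM. Thus $\wt L_1$ and $L_\phi$ are intertwined by multiplication by the strictly positive factor $\phi$ together with the weight $\phi^{-5}$.

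Finally I would pass to the lowest eigenvalues $\wt\varsigma$ and $\varsigma$ of Definition \ref{stab}, characterised variationally through the Rayleigh quotients associated with the measures $dV_{\wt g}=\phi^6\,dV_g$ and $dV_g$. Substituting $\eta=\phi\gamma$ and using the conjugation relation together with $dV_{\wt g}=\phi^6 dV_g$, the numerator transforms as $\int_{\wt{\cal M}}\gamma\,\wt L_1\gamma\,dV_{\wt g}=\int_{\cal M}\eta\,L_\phi\eta\,dV_g$, while the normalisation becomes $\int_{\wt{\cal M}}\gamma^2\,dV_{\wt g}=\int_{\cal M}\phi^4\eta^2\,dV_g$. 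Hence $\wt\varsigma$ is the lowest eigenvalue of exactly the same quadratic form $\eta\mapsto\int_{\cal M}\eta\,L_\phi\eta\,dV_g$ that defines $\varsigma$, merely weighted by $\phi^4$ in the normalisation. Since $\phi$ is smooth and strictly positive on the compact ${\cal M}$, this weight is bounded between positive constants, so the infimum retains its sign and its vanishing (with ground state attained, by ellipticity on the compact manifold). Consequently $\wt\varsigma$ and $\varsigma$ are simultaneously positive, nonnegative, zero, nonpositive, or negative, which is exactly the assertion that each of the five stability properties transfers from $\phi$ to the ID.

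As for the main obstacle: conceptually there is little difficulty once the covariance identity is in hand, and the real content is the observation that conformal covariance turns the ID eigenvalue problem into a positively weighted copy of the SM problem. The only genuinely delicate point is the bookkeeping in the first step --- in particular confirming $\wt\Omega^2=\phi^{-12}\Omega^2$ and checking that the powers of $\phi$ in the $\Lambda$- and $\Omega^2$-terms conspire to produce the single overall factor $\phi^{-5}$ --- since any slip there would destroy the clean conjugation relation on which the rest of the argument rests.
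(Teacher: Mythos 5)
Your proposal is correct and follows essentially the same route as the paper: conformal covariance of the Lichnerowicz operator and its linearisation (the paper states $\wh L_{\wh\phi}\wh\gamma=\vartheta^{-5}L_\phi\gamma$ for a general conformal factor $\vartheta$ and then sets $\vartheta=\phi$, which is your conjugation relation $\wt L_1\gamma=\phi^{-5}L_\phi(\phi\gamma)$), followed by the observation that the Rayleigh--Ritz numerator is invariant while the denominator is merely reweighted by a positive factor, so the sign of the lowest eigenvalue is preserved. Your write-up is if anything slightly more explicit about the power-counting and the sign-preservation step than the paper's.
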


\begin{proof}

We show, more generally, that only the conformal class of the SM matters for
stability of the solution of (\ref{phi}) and for the resulting ID.
We first note that (\ref{phi}) is obviously conformally invariant 
in the sense that when $\phi$ solves (\ref{phi})
and defines ID $(\wt {\cal M},\wt g_{ij} = \phi^4 g_{ij}, \wt K_{ij} =
\phi^{-2} K_{ij})$, then  $\wh \phi = \vartheta^{-1} \phi$ solves (\ref{phi})
on the SM  $(\wh {\cal M},\wh g_{ij} = \vartheta^4 g_{ij}, \wh K_{ij} =
\vartheta^{-2} K_{ij})$ and defines the same ID.
A conformal covariance property also holds for the linearisation {\it operator} 
(\ref{lin}) in the sense that the rescaling $\wh \gamma  =
\vartheta^{-1}\gamma$ gives  $\wh L_{\wh \phi}  \wh \gamma = \vartheta^{-5} L_{\phi}
\gamma$ in terms of $\wh L_{\wh \phi}$ on $\wh {\cal M}$. 
A subtlety now arises since the eigenvalue equation  $L_{\phi} \mu= \lambda \mu$
 is obviously {\it not conformally invariant} 
(when the eigenfunction $\mu$ is scaled as above)
and the same applies to the eigenvalues themselves.
However, what matters for stability is only the sign (or the vanishing) of 
the lowest eigenvalue $\varsigma$. To show that this is actually invariant we 
recall the Rayleigh-Ritz characterisation,
\begin{equation}
\varsigma = \inf_{\gamma \in C^{\infty}, \gamma \not\equiv 0} \frac{\int_{\cal M} \gamma
L_{\phi} \gamma dv}{\int_{\cal M} \gamma^2}
\end{equation}
and note that its numerator is invariant, while the denominator is
manifestly positive. The statement of the Lemma is now obtained by setting  
$\vartheta = \phi$ in the above arguments. 
\end{proof}
\noindent
{\bf Remarks.}
\begin{enumerate}
\item Recalling that the eigenvalues $\lambda$ depend on the conformal scaling of the metric in general, 
we denote by $\wt \lambda$ the eigenvalues w.r.t to 
the generated ID (i.e. when $\phi \equiv 1$ in (\ref{lin})).
\item
The above definitions of stability under conformal deformations have nothing to do with 
dynamical stability of the solutions evolving from the data. We will return to this issue 
in connection with the Kerr-de Sitter example in Sect.3.4.  
\end{enumerate}
\begin{prop}
\label{stest} 
Let $(\wt {\cal M}, \wt g_{ij}, \wt K_{ij})$  be ID with volume $\wt V$ and lowest eigenvalue
$\wt \varsigma$. Then 
\begin{enumerate}
\item
\begin{equation}
\label{OgV}
\int_{\wt {\cal M}} \wt \Omega^2 d \wt v \ge (\Lambda + \wt \varsigma) \wt V.
\end{equation}
 \item
If $\wt \Omega^2 \le \Lambda$ and  $\wt \Omega^2 \not\equiv \Lambda$ on $\wt {\cal M}$, then the ID are strictly unstable. 
\item If $\wt \Omega \equiv 0$, then $\wt \varsigma = - \Lambda$, while $\wt
\Omega^2 \equiv \Lambda$ implies $\wt \varsigma = 0$.
\end{enumerate}  
\end{prop}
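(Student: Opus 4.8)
The plan is to exploit the Hamiltonian constraint to bring the linearised operator, evaluated on the ID (i.e.\ at $\phi \equiv 1$ with tilded quantities, as prescribed by Definition \ref{stab}), into a strikingly simple Schr\"odinger form. Writing out (\ref{lin}) at $\phi \equiv 1$ relative to the physical metric gives
\begin{equation}
\wt L_1 \gamma = - \wt\Delta \gamma + \frac{\wt R}{8}\gamma - \frac{5\Lambda}{4}\gamma + \frac{7\wt\Omega^2}{8}\gamma,
\end{equation}
and substituting the Hamiltonian constraint $\wt R = \wt\Omega^2 + 2\Lambda$ from (\ref{con}) collapses both the $\wt\Omega^2$- and the $\Lambda$-coefficients, yielding
\begin{equation}
\label{Lsimple}
\wt L_1 \gamma = - \wt\Delta \gamma + (\wt\Omega^2 - \Lambda)\,\gamma.
\end{equation}
All three claims then follow from this single identity together with the Rayleigh--Ritz characterisation of $\wt\varsigma$.

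For part 1, I would feed the constant test function $\gamma \equiv 1$ into the Rayleigh quotient. Since constants are annihilated by $\wt\Delta$ on the closed manifold $\wt{\cal M}$, the numerator becomes $\int_{\wt{\cal M}}(\wt\Omega^2 - \Lambda)\,d\wt v$ while the denominator is $\wt V$, so that $\wt\varsigma \le \wt V^{-1}\int_{\wt{\cal M}}(\wt\Omega^2 - \Lambda)\,d\wt v$; rearranging yields (\ref{OgV}) at once. Part 2 is then immediate: if $\wt\Omega^2 \le \Lambda$ with $\wt\Omega^2 \not\equiv \Lambda$, the (continuous) integrand $\wt\Omega^2 - \Lambda$ is nonpositive and strictly negative on a nonempty open set, hence on a set of positive measure, so the same test function makes the Rayleigh quotient strictly negative and forces $\wt\varsigma < 0$, i.e.\ strict instability.

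For part 3 I would simply read off the spectrum of (\ref{Lsimple}). If $\wt\Omega \equiv 0$ the operator reduces to $-\wt\Delta - \Lambda$, whose lowest eigenvalue is attained by the constants (the ground state of $-\wt\Delta$, with eigenvalue $0$), giving $\wt\varsigma = -\Lambda$. If instead $\wt\Omega^2 \equiv \Lambda$ the potential vanishes identically, so $\wt L_1 = -\wt\Delta$ and its lowest eigenvalue is again $0$, whence $\wt\varsigma = 0$.

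There is no genuine obstacle here: the only real insight is that the constraint is \emph{exactly} what is needed to trade $\wt R$ for $\wt\Omega^2$ and $\Lambda$, after which everything reduces to elementary facts about the ground state of a Schr\"odinger operator on a compact manifold. The one point demanding a little care is that, by Definition \ref{stab}, the stability of ID is defined through (\ref{lin}) at $\phi \equiv 1$ with respect to the \emph{physical} metric, so that all Laplacians, curvatures and volume elements entering the argument above must consistently be the tilded ones.
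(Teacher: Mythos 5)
Your argument is correct, and it shares the paper's one essential step: substituting the Hamiltonian constraint $\wt R = \wt\Omega^2 + 2\Lambda$ from (\ref{con}) into (\ref{lin}) at $\phi\equiv 1$ so that the linearisation collapses to $-\wt\Delta + (\wt\Omega^2-\Lambda)$. Where you diverge is in how the integral statement is extracted. The paper works with the positive ground-state eigenfunction $\wt\zeta$ of this operator, divides the eigenvalue equation (\ref{Lz}) by $\wt\zeta$ and integrates over the closed manifold to get the \emph{identity} (\ref{OeV}), $\int_{\wt{\cal M}}(\wt\Omega^2 - \wt\zeta^{-2}|\wt\nabla\wt\zeta|^2)\,d\wt v = (\Lambda+\wt\varsigma)\wt V$, from which parts 1 and 2 follow because the gradient term is nonnegative; part 3 is then obtained by applying the maximum principle to (\ref{Lz}). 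You instead insert the constant test function into the Rayleigh--Ritz quotient, which yields the inequality in part 1 directly, gives part 2 by strict negativity of the quotient, and settles part 3 by observing that the potential is constant so the spectrum is just that of $-\wt\Delta$ shifted. Both routes are rigorous; yours is marginally more elementary for parts 1 and 2 (it does not need positivity or simplicity of the ground state there), while the paper's version carries slightly more information, since the identity (\ref{OeV}) quantifies the deficit in (\ref{OgV}) as $\int_{\wt{\cal M}}\wt\zeta^{-2}|\wt\nabla\wt\zeta|^2\,d\wt v$, i.e.\ it measures how far the ground state is from being constant.
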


\begin{proof}
Combining (\ref{con}) with (\ref{lin}) for $\phi \equiv 1$ we find 
\begin{equation}
\label{Lz}
\wt \Delta  \wt \zeta = (\wt \Omega^2 - \Lambda - \wt \varsigma)\wt \zeta
\end{equation}

 where $\wt \zeta$ is the eigenfunction corresponding 
to the lowest eigenvalue $\wt \varsigma$.
As $\wt \zeta$ has no zeros it can be chosen to be positive.
Dividing by $\wt \zeta$ and integrating, we obtain
\begin{equation}
\label{OeV}
\int_{\wt {\cal M}} (\wt \Omega^2 -  \wt \zeta^{-2} |\wt \nabla \wt \zeta|^2)
d\wt v   =  
(\Lambda + \wt \varsigma) \wt V
\end{equation}
which implies assertions 1. and 2.. Point 3. is a consequence of
the maximum principle applied to (\ref{Lz}).
\end{proof}
{\bf Remark.} We wish to clarify here an important point which arises by
combining  Lemma \ref{stabil} with Proposition \ref{stest}: 
These results {\it do not imply} that any solution of (\ref{lin}) with
$\Omega$ small enough on a SM ${\cal M}$ leads to unstable ID $\wt {\cal M}$. Rather, the tilde 
on $\wt \Omega^2$ in the requirement $\wt \Omega^2 = \phi^{-12} \Omega^2 \le  \Lambda$ of 
Proposition \ref{stest} must not be overlooked.  In fact, the stable examples with small 
$\Omega$ but large $\wt \Omega$ (and hence small $\phi$)  will play a key role below.

\subsection{Stability and symmetry}

We recall from the introduction that ``symmetry-inheriting''  and ``symmetry
breaking'' are the properties of solutions of (\ref{phi}) of (not) sharing all 
symmetries of the equation.
This behaviour is related to  (in-)stability of solutions; we will observe
it in the Bowen-York example of Sect. 3.3.3.  We note here the following 

\begin{prop}
\label{csym} 
 Assume that a SM $({\cal M}, g_{ij})$ and $\Omega$ have a continuous symmetry $\xi$, i.e. 
\begin{equation}
{\cal L}_{\xi} g_{ij} = 0 \qquad {\cal L}_{\xi} \Omega = 0.
\end{equation}  
where ${\cal L}_{\xi}$ is the Lie derivative.
Then all stable solutions of (\ref{phi}) are invariant as well, i.e.
\begin{equation}
{\cal L}_{\xi} \phi = 0.
\end{equation} 
\end{prop}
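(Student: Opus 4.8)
The plan is to show that the Lie derivative $\gamma := {\cal L}_\xi \phi = \xi^i \partial_i \phi$ of any solution along the symmetry lies in the kernel of the linearised operator $L_\phi$, and then to exploit the spectral structure of $L_\phi$ together with an integral identity forced by $\xi$ being Killing.

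First I would differentiate the equation $\Phi(\phi) = 0$ along $\xi$. Since $\phi$ is a scalar, ${\cal L}_\xi$ applied to $\Phi(\phi)$ is just the directional derivative $\xi^k \partial_k$. The essential observation is that all geometric ingredients of $\Phi$ are $\xi$-invariant: ${\cal L}_\xi g_{ij} = 0$ yields ${\cal L}_\xi R = 0$ and the commutation ${\cal L}_\xi(\Delta\phi) = \Delta({\cal L}_\xi\phi)$, while ${\cal L}_\xi\Omega = 0$ gives ${\cal L}_\xi\Omega^2 = 0$. Applying these term by term to ${\cal L}_\xi\Phi(\phi)$ and comparing with (\ref{lin}), one finds exactly ${\cal L}_\xi\Phi(\phi) = L_\phi\gamma$. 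Because $\Phi(\phi) \equiv 0$, the left-hand side vanishes, so $L_\phi\gamma = 0$, i.e. $\gamma \in \ker L_\phi$.

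Next I would record the integral identity $\int_{\cal M}\gamma\,dv = 0$, valid because a Killing field is divergence-free: $\gamma = \xi^i\partial_i\phi = \nabla_i(\xi^i\phi)$ is a total divergence, which integrates to zero on the closed manifold ${\cal M}$. Now I invoke stability. If $\phi$ is strictly stable ($\varsigma > 0$), then $0$ is not in the spectrum of $L_\phi$, so $\ker L_\phi = \{0\}$ and $\gamma \equiv 0$ at once. If $\phi$ is marginally stable ($\varsigma = 0$), then $0$ is the lowest eigenvalue; by the standard theory of Schr\"odinger-type operators on a compact manifold it is simple, with a sign-definite (say strictly positive) eigenfunction $\zeta$. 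Hence every kernel element is a constant multiple of $\zeta$, so $\gamma = c\,\zeta$; but then $0 = \int_{\cal M}\gamma\,dv = c\int_{\cal M}\zeta\,dv$ together with $\int_{\cal M}\zeta\,dv > 0$ forces $c = 0$, again giving $\gamma \equiv 0$. In either case ${\cal L}_\xi\phi = 0$, which is the claim.

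The main step requiring care is the first one: verifying that differentiation along $\xi$ reproduces precisely the linearised operator (\ref{lin}) applied to $\gamma$. This rests on the commutation of ${\cal L}_\xi$ with the Laplacian of the invariant metric and on ${\cal L}_\xi R = {\cal L}_\xi\Omega^2 = 0$; once these are in hand the matching of coefficients with (\ref{lin}) is routine. The spectral input in the marginal case is classical and the integral identity is elementary, so the genuine content is concentrated in the algebraic identity $L_\phi({\cal L}_\xi\phi) = {\cal L}_\xi\Phi(\phi)$.
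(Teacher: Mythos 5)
Your proposal is correct and follows essentially the same route as the paper: you show ${\cal L}_\xi\phi$ lies in the kernel of $L_\phi$ by commuting ${\cal L}_\xi$ through the equation, then rule out a nonzero kernel element using the simplicity and sign-definiteness of the ground state together with the vanishing of $\int_{\cal M}\nabla_i(\xi^i\phi)\,dv$ on the compact manifold. The only cosmetic difference is that the paper phrases the last step as a contradiction with ${\cal L}_\xi\phi>0$ rather than writing $\gamma=c\,\zeta$ and solving for $c$, which is logically equivalent.
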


\begin{proof} 
We first note that, for any solution $\phi$ of (\ref{phi}),
$ {\cal L}_{\xi} \phi$ is a solution of the linearized equation: 
Using that the Lie derivative ${\cal L}_{\xi}$ 
commutes with $\Delta$ and $R$ for invariant metrics $g_{ij}$ 
gives the second equation in 
\begin{equation}
\label{Lie}
 0 = {\cal L}_{\xi} \left[ \left( \Delta - \frac{R}{8} \right) \phi + \frac{\Lambda}{4}
 \phi^5 + \frac{\Omega^2}{8 \phi^{7}} \right] = L ({\cal L}_{\xi} \phi)
\end{equation}  
while the first one is obvious from the fact that $\phi$ solves
(\ref{phi}).

Next, as $\phi$ is stable, the lowest eigenvalue of $L$ is non-negative. 
As already noted in Sect. 2.1., the lowest eigenvalue is always non-degenerate and the corresponding 
eigenfunction has no zeros. 
It follows from (\ref{Lie}) that either ${\cal L}_{\xi} \phi \equiv 0$ which we want to prove, 
or that ${\cal L}_{\xi} \phi$ is a ground state eigenfunction with eigenvalue zero
and without zeros; by changing sign of $\phi$ if necessary, we thus have 
\begin{equation}
\label{Liephi}
{\cal L}_{\xi} \phi > 0.
\end{equation}
This we rule out as follows.
Since $\xi^i$ is a Killing vector, (\ref{Liephi}) can be rewritten as
\begin{equation}
\nabla_i (\phi \xi^{i}) > 0.
\end{equation}
But as ${\cal M}$ is compact, the l.h.s integrates to zero and gives a
contradiction. 
\end{proof}

We remark that arguments along the lines above can be and have been 
applied to a  large class of semilinear and quasilinear elliptic equations on compact 
manifolds (cf. e.g. Sect.8 of \cite{AMS}).

\subsection{Existence, non-existence and stability}

We adapt here the results of Hebey et al. \cite{EH,HPP,HV} and Premoselli \cite{BP} to the present
context in order to obtain bounds on $\Omega$ and its integrals in terms of geometric quantitites, which
guarantee existence or non-existence of solutions of (\ref{phi}).   
Premoselli's result also has an impact on the relation between  stability
and symmetry, which we state in Corollary 1. after the Theorem. 

We remark that the results \cite{HPP,BP} refer to a more general equation
than (\ref{phi}) in which $R$ and $\Lambda$ can be replaced by a large class
of functions. A feature of the present equation (\ref{phi}) already noted in
Sect. 2.1. is its conformal invariance which is obvious from the purpose
which is serves, and which  simplifies the (non-)existence criteria.

Turning now to the results of \cite{HPP}, we first note that the existence result Thm 3.1 is
indeed formulated in an invariant way under conformal rescalings
$\wh g_{ij} = \vartheta^4 g_{ij}$ of the SM provided the test function $\varphi$ is assigned  the 
conformal weight $\wh \varphi = \vartheta^{-1} \varphi$.
On the other hand, the non-existence result, Thm.2.1. of \cite{HPP} is not
conformally invariant.
 We reproduce these results below (Thm. \ref{t1}) under the simplifying assumption 
that the SM has constant curvature. Needless to say, this restriction breaks
 conformal invariance. In the examples discussed below, the SM $\mathbb{S}^2 \times \mathbb{S}$ of Sect.
 3.3. has constant curvature $R = 2\Lambda$
while the Kerr de Sitter data of Sect 3.4. have not. 
  
In the existence criterium (\ref{ex}), there enters the Yamabe constant
\begin{equation}
\label{Yam}
Y = \inf_{\gamma \in C^{\infty}, \gamma \not\equiv 0}  
\frac{\int_{\cal M} \left(8 |\nabla \gamma|^2 + R \gamma^2 \right) dv}
{ \left( \int_{\cal M} \gamma^6  dv  \right) ^{1/3} }.
\end{equation}
%where $\nabla$ is the covariant derivative w.r.t. $g_{ij}$.
We note that $Y$ is conformally invariant. Moreover, by virtue of Yamabe's theorem \cite{LP}, there
always exists a scaling which minimizes $Y$, and such a minimizer has constant curvature.
Therefore, the definition (\ref{Yam}) can be reduced to  $Y = \inf (R V^{2/3})$  
where $V$ is the volume of ${\cal M}$, and the infimum is taken over all metrics with 
$R= const.$ within the conformal class.

We also remark that, in order to have a chance of satisfying (\ref{con}) it is clear 
that we have to set out from SM $({\cal M}, g_{ij})$ which are in the positive Yamabe class, i.e. $Y > 0$.

\noindent
\begin{thm} 
\label{t1}
We take a SM $({\cal M},g_{ij}, K_{ij})$  as defined in Def.
(\ref{seed}) but
with constant scalar curvature $R$.    
 \begin{enumerate}
\item Assuming that 
\begin{equation}
\label{ex}
\int_{\cal M} \Omega^{2} dv \le \frac{Y^6}{256 \Lambda^2 R^3 V^3}  
\end{equation}
Equ. (\ref{phi}) has a smooth, positive solution.
\item
Assume that
\begin{equation}
\label{noex}
\int_{\cal M} \Omega^{5/6} dv >   \frac{R^{5/4} V}{3^{5/4} \Lambda^{5/6}} 
\end{equation}
then (\ref{phi}) has no smooth, positive solution.
\end{enumerate}
\end{thm}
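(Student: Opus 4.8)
The plan is to prove the two parts of Theorem \ref{t1} by specializing the general results of Hebey et al.\ \cite{HPP} to the constant-curvature setting, and then converting the abstract conditions appearing there into the explicit bounds \eqref{ex} and \eqref{noex}. The two assertions are of a completely different character—one is a variational existence result, the other a pointwise/integral non-existence obstruction—so I would treat them separately.

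For the existence part (1), I would start from the conformally invariant existence criterion of Thm.~3.1 of \cite{HPP}. Since the present equation \eqref{phi} is conformally invariant (as noted in Sect.~2.1), I am free to work directly with a constant-curvature representative of the conformal class, and the Yamabe constant $Y$ then reduces to $Y = R V^{2/3}$ as explained before the theorem. The abstract existence condition of \cite{HPP} asserts solvability whenever a certain functional of the ``first'' coefficient ($\Lambda$ here) and the ``low-order'' coefficient ($\Omega^2$ here) falls below a threshold determined by $Y$. The task is then bookkeeping: substitute the constant-curvature value of $Y$, track the numerical constant through the conformal normalization, and read off that the resulting sufficient condition is exactly $\int_{\cal M}\Omega^2\,dv \le Y^6/(256\,\Lambda^2 R^3 V^3)$. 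I would verify the exponents and the factor $256$ by checking scaling consistency: under $g\mapsto c^2 g$ both sides must transform the same way, which fixes the powers of $R$ and $V$ and serves as an internal check on the constant.

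For the non-existence part (2), the natural route is to \emph{integrate} the equation against a suitable power of a putative solution, exactly as hinted in the introduction (``by integrating \eqref{phi}''). If $\phi>0$ solves \eqref{phi}, I would test against $\phi^{-1}$ or another negative power, integrate over the compact ${\cal M}$ so that the $\Delta\phi$ term integrates by parts to a sign-definite gradient term, and combine the curvature, $\Lambda\phi^5$, and $\Omega^2\phi^{-7}$ contributions. Using H\"older's inequality on the coupled $\phi$-dependent terms—this is where the exponent $5/6$ in \eqref{noex} must emerge—I would bound $\int \Omega^{5/6}$ from above by a quantity built from $R$, $V$, and $\Lambda$, forcing the inequality $\int_{\cal M}\Omega^{5/6}\,dv \le R^{5/4}V/(3^{5/4}\Lambda^{5/6})$. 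A solution can therefore not exist once this is violated, which is the claim. The delicate point is choosing the test power and applying H\"older with the precise conjugate exponents that produce $\tfrac{5}{4}$, $\tfrac{5}{6}$ and the factor $3^{5/4}$.

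\textbf{Main obstacle.} I expect the hard part to be part (2): getting the \emph{sharp} constants and the exponent $5/6$ requires an optimal choice of test function and an optimal H\"older split, and a naive integration will yield the right structure but suboptimal constants. In contrast, part (1) is essentially a translation exercise once the conformally invariant formulation of \cite{HPP} is in hand, the only risk being an arithmetic slip in propagating the numerical factor through the constant-curvature reduction $Y = RV^{2/3}$. A useful sanity check throughout is dimensional/scaling consistency under constant conformal rescalings, which both \eqref{ex} and \eqref{noex} must respect.
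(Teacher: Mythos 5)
Your proposal is sound and, for part (1), coincides with the paper's proof: the authors likewise apply Thm.~3.1 of \cite{HPP} as a pure translation exercise, relating the Sobolev constant there to the Yamabe constant via $(Y/8)^3 = 1/S_h$ and choosing the test function $\varphi\equiv 1$ (they even concede, as you anticipate, that this likely misses the optimal numerical factor; they also dispose of $\Omega\equiv 0$ separately via Yamabe's theorem). For part (2) you diverge: the paper does not derive the bound at all but simply cites Thm.~2.1 of \cite{HPP}, supplemented by Thm.~3 of \cite{EH} to remove the hypothesis $\Omega>0$ (essential here, since $\Omega$ vanishes on the axis in all the examples). Your self-contained route does work and in fact reproduces the constant exactly, but one detail in your sketch is off: the correct test power is $0$, not a negative power. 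Integrating \eqref{phi} directly gives $\tfrac{R}{8}\int\phi=\tfrac{\Lambda}{4}\int\phi^5+\tfrac18\int\Omega^2\phi^{-7}$; combining this with $\int\phi\le V^{4/5}\bigl(\int\phi^5\bigr)^{1/5}$ and the H\"older splitting $\int\Omega^{5/6}=\int(\Omega^2\phi^{-7})^{5/12}(\phi^5)^{7/12}\le\bigl(\int\Omega^2\phi^{-7}\bigr)^{5/12}\bigl(\int\phi^5\bigr)^{7/12}$, and optimizing over $\int\phi^5$, yields precisely $R^{5/4}V/(3^{5/4}\Lambda^{5/6})$. Testing against $\phi^{-1}$, as you first suggest, forces the exponent $2/3$ rather than $5/6$ (the balance condition $a(k-7)+(1-a)(k+5)=0$ with $2a=s$ gives $s=(k+5)/6$), so that specific choice would not close. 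Your approach buys a self-contained, constant-sharp argument valid without assuming $\Omega>0$; the paper's buys brevity at the cost of an extra citation to handle the zeros of $\Omega$.
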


\begin{proof}

For $\Omega \equiv 0$ the first part is obvious from  Yamabe's theorem \cite{LP}.
Otherwise, this part is a direct application of Thm. 3.1. of \cite{HPP},
observing that the Sobolev constant $S_h$ of this theorem is related to the Yamabe
constant $Y$ via $ (Y/8)^3 = 1/S_h$ in the present situation, and setting the ``test function''
$\varphi \equiv 1$. 
(Thereby we are likely to miss the optimal numerical factor on the r.h.s. of (\ref{ex})).
The second part follows readily from Thm. 2.1. of
\cite{HPP} except for the fact that $\Omega > 0$ was required there. 
The extension which allows for zeros in $\Omega$ is covered by Thm. 3  of
\cite{EH}.
\end{proof}

We now rewrite Premoselli's results \cite{BP}.

\begin{thm}
\label{bp}
We decompose $\Omega = b \Omega_0$  in (\ref{phi}) (in a non-unique way) 
in terms of a constant $b > 0$ and a function $\Omega_0$. The following
statements  refer to the  solubility of (\ref{phi}) on a SM depending on the choice of $b$, 
when $\Omega_0$ is kept fixed: There exists $0< b_* < \infty$ such that
(\ref{phi}) has
\begin{enumerate}
\item At least two  positive solutions for $b < b_*$, at least one of
which is strictly stable. Moreover, one of the strictly stable solutions, 
called  $\phi(b)$, is ``minimal'' in the sense that for any positive solution $\phi \not\equiv
\phi(b)$ we have $\phi > \phi(b)$.   
\item A unique, positive solution for $b = b_*$ which is marginally stable.
\item No solution for $b > b_*$
\end{enumerate}
\end{thm}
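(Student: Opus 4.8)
The plan is to exhibit the three regimes as the two sides of a single fold (turning point) in the solution branch, with $b$ as bifurcation parameter, organised around the energy functional
\[
E_b(\phi) = \frac12\int_{\cal M}\Bigl(|\nabla\phi|^2 + \tfrac{R}{8}\phi^2\Bigr)\,dv - \frac{\Lambda}{24}\int_{\cal M}\phi^6\,dv + \frac{b^2}{48}\int_{\cal M}\Omega_0^2\,\phi^{-6}\,dv ,
\]
whose critical points among positive $H^1$-functions are exactly the solutions of (\ref{phi}), since $E_b'(\phi)[\gamma]=\int_{\cal M}\Phi(\phi)\,\gamma\,dv$ and $E_b''(\phi)=L_\phi$ of (\ref{lin}). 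Note that the $\phi^{-6}$ weight forces $E_b\to+\infty$ as $\phi\to0$ on $\{\Omega_0\neq0\}$ (so solutions are pushed away from zero), while the critical term $-\frac{\Lambda}{24}\int\phi^6$ makes $E_b$ unbounded below along $\phi=t\phi_0$, $t\to\infty$. First I would set
\[
b_* := \sup\{\, b>0 : \text{(\ref{phi}) admits a positive solution}\,\},
\]
and establish $0<b_*<\infty$: positivity follows from the small-$\Omega^2$ existence theory adapted in Theorem \ref{t1} (since $\int_{\cal M}\Omega^2=b^2\int_{\cal M}\Omega_0^2\to0$ as $b\to0$), and finiteness from the non-existence mechanism for large $\Omega^2$ recalled in the introduction (integrating (\ref{phi}) against a suitable weight, or the large-$\int\Omega^{5/6}$ criterion, both of which are violated for $b$ large).

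Next I would produce the minimal stable solution $\phi(b)$ for each $b<b_*$ by monotone iteration. A solution $\phi_1$ at any parameter $b_1\in(b,b_*)$ is a \emph{supersolution} at $b$, because $\Phi_b(\phi_1)=\Phi_{b_1}(\phi_1)+\frac{(b_1^2-b^2)\Omega_0^2}{8\phi_1^{7}}\ge0$; pairing it with a suitably small subsolution and iterating upward yields a solution, and taking the pointwise infimum over all positive solutions (or iterating from the smallest subsolution) gives the minimal one. Here the first delicate point appears: constructing a subsolution lying below every solution is easy on $\{\Omega_0\neq0\}$ thanks to the $\phi^{-7}$ blow-up, but near the zero set of $\Omega_0$ (the axis, cf. Sect.~3) it requires the refined barriers of the cited works. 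Stability of $\phi(b)$ then follows from minimality: if its lowest eigenvalue $\varsigma(b)$ were negative, then with $\psi_0>0$ the (nodeless, non-degenerate) ground state of $L_{\phi(b)}$ one has $\Phi_b(\phi(b)-\epsilon\psi_0)=-\epsilon\,\varsigma(b)\,\psi_0+O(\epsilon^2)>0$, so $\phi(b)-\epsilon\psi_0$ is a strictly smaller supersolution, producing a solution below $\phi(b)$ — a contradiction. Hence $\varsigma(b)\ge0$, and the implicit function theorem (applicable wherever $\varsigma\neq0$, so that $L_{\phi(b)}$ is invertible) propagates the branch up to, but not past, $b_*$; this both forces $\varsigma(b)>0$ for $b<b_*$ (strict stability) and shows the branch cannot be continued beyond $b_*$.

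For the \emph{second} solution when $b<b_*$ I would use the mountain pass theorem: strict stability makes $\phi(b)$ a strict local minimum of $E_b$, while $E_b\to-\infty$ along large dilations provides a second valley, so $E_b$ has mountain pass geometry and a critical value $c_b>E_b(\phi(b))$; the corresponding critical point is the second (unstable) solution, lying above $\phi(b)$ by minimality. This is the step I expect to be the main obstacle, because the $\phi^5$ nonlinearity sits exactly at the critical Sobolev exponent: the Palais--Smale condition can fail through bubbling, and one must show the mountain pass level $c_b$ stays strictly below the first concentration threshold fixed by the Aubin--Talenti bubble for $-\Delta+\tfrac{R}{8}$ against $\phi^5$ (a Brezis--Nirenberg-type energy estimate). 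The same critical-exponent issue governs the a priori bounds in the final step.

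Finally, at $b=b_*$ I would pass to the limit $b_n\uparrow b_*$ along the minimal branch. Monotonicity of $\phi(b)$ in $b$, together with uniform $C^0$ bounds away from $0$ (from the $\Omega^2$ term) and from $\infty$ (from the maximum principle applied to (\ref{phi}), the delicate bound being the upper one at the critical exponent, where stability of the branch is used to exclude blow-up), gives convergence in $C^2$ to a positive solution $\phi_*$ at $b_*$. Its eigenvalue satisfies $\varsigma(b_*)=0$: it cannot be positive, since then the implicit function theorem would continue the solution branch to some $b>b_*$, contradicting the definition of $b_*$; hence $\phi_*$ is marginally stable, as asserted. Uniqueness at $b_*$ and the ``precisely two solutions'' count just below it follow from the local fold structure at the simple degeneracy $\varsigma(b_*)=0$ (a Crandall--Rabinowitz turning-point analysis, the bifurcation argument invoked for Proposition~4), and non-existence for $b>b_*$ holds by the very definition of $b_*$.
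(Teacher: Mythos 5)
First, a point of comparison: the paper does not prove this theorem at all --- its ``proof'' is a one-line citation combining Theorem 1.1, Proposition 3.1 and Proposition 6.1 of Premoselli \cite{BP}. Your proposal is therefore an attempt to reconstruct Premoselli's argument from scratch, and the strategy you outline (define $b_*$ as a supremum, show the solvable set is downward-closed via the supersolution monotonicity in $b$, build the minimal stable branch, get the second solution by mountain pass at the critical exponent, and fold the branch at $b_*$ via Crandall--Rabinowitz) is indeed essentially the architecture of \cite{BP}. So the route is right in outline; the issue is that the theorem's actual content lives precisely in the steps you defer.

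Three of these deferrals are genuine gaps rather than routine details. (i) The minimal solution: with the sign conventions of (\ref{phi}) (positive coefficient of $\phi^5$, i.e.\ the focusing case), there is \emph{no} small global positive subsolution once $\Omega_0$ has zeros --- for a constant $\epsilon$ one needs $\tfrac{R}{8}\epsilon\le\tfrac{\Lambda}{4}\epsilon^5+\tfrac{b^2\Omega_0^2}{8}\epsilon^{-7}$, which fails near the axis as $\epsilon\to0$ --- so the monotone-iteration scheme you describe cannot be closed by ``refined barriers''; Premoselli instead works variationally (minimisation over an order interval / subcritical approximation), and without some such device your construction of $\phi(b)$, and hence the minimality and stability arguments built on it, does not get off the ground. (ii) The critical-exponent compactness you flag (Palais--Smale failure for the mountain pass, and the a priori upper bound needed to pass to the limit $b\to b_*$) is the analytic heart of \cite{BP}; acknowledging the Br\'ezis--Nirenberg threshold is not the same as verifying it for this functional. (iii) Uniqueness at $b=b_*$ does \emph{not} follow from the local fold at $\varsigma(b_*)=0$: Crandall--Rabinowitz only controls solutions in a neighbourhood of $\phi_*$, and excluding a second solution at $b_*$ far from $\phi_*$ (as asserted in point 2 of the theorem) requires a global comparison argument, which is one of the more delicate points in Premoselli's proof. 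Similarly, your claim that the implicit function theorem ``forces $\varsigma(b)>0$ for all $b<b_*$'' is asserted rather than proved; a marginally stable point on the minimal branch strictly before $b_*$ is not excluded by what you have written. In short: correct skeleton, but as a proof it is incomplete exactly where the paper (rightly) leans on \cite{BP}.
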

\begin{proof}
These statements just combine Theorem 1.1., Proposition 3.1. (positivity of solutions) 
and Proposition 6.1. (stability) of \cite{BP}. (The statement of strict stability in point 1. is not
explicit in the formulation of the latter Proposition, but contained in its  proof).    
\end{proof}

The following extension of Proposition \ref{csym}  is an immediate consequence of this
theorem.

\begin{cor} 
\label{dsym}
If $({\cal M}, g_{ij})$ and $\Omega$ have a discrete symmetry, then the stable ``minimal'' solution of 
point 1. in Premoselli's theorem, 
as well as the unique solution of point 2. of this theorem share this symmetry.
\end{cor}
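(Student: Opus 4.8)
The plan is to exploit the \emph{uniqueness} built into both conclusions of Theorem \ref{bp}: a discrete isometry cannot move a solution that is characterised uniquely. First I would set up the induced action on solutions. Let $\sigma: {\cal M} \to {\cal M}$ denote the discrete symmetry, so that $\sigma^* g_{ij} = g_{ij}$ and $\sigma^* \Omega = \Omega$ by hypothesis. Since $\sigma$ is an isometry it commutes with the Laplacian $\Delta$, preserves the scalar curvature $R$, and preserves the Riemannian volume form $dv$. Writing $\sigma^* \phi := \phi \circ \sigma$, it follows directly from the form of (\ref{phi}) that $\Phi(\sigma^* \phi) = \sigma^* \Phi(\phi)$, since each term is transported correctly: $\Delta(\sigma^*\phi) = \sigma^*(\Delta\phi)$, $R\,\sigma^*\phi = \sigma^*(R\phi)$, $(\sigma^*\phi)^5 = \sigma^*(\phi^5)$, and $\Omega^2/(\sigma^*\phi)^7 = \sigma^*(\Omega^2/\phi^7)$ using $\sigma^*\Omega = \Omega$. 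Hence $\sigma$ maps positive solutions of (\ref{phi}) to positive solutions. This is the only structural input needed.

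The unique solution of point 2 is then immediate: if $\phi$ is the unique positive solution at $b = b_*$, then $\sigma^* \phi$ is also a positive solution at $b = b_*$, so $\sigma^* \phi = \phi$ by uniqueness, i.e. $\phi$ shares the symmetry.

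For the minimal solution $\phi(b)$ of point 1 the argument is only slightly longer. Again $\sigma^* \phi(b)$ is a positive solution, so by the minimality property stated in Theorem \ref{bp} either $\sigma^* \phi(b) \equiv \phi(b)$, which is what we want, or $\sigma^* \phi(b) > \phi(b)$ everywhere on ${\cal M}$. To exclude the strict case I would integrate over ${\cal M}$: since $\sigma$ is an isometry it preserves $dv$, so $\int_{\cal M} \sigma^* \phi(b)\, dv = \int_{\cal M} \phi(b)\, dv$, which is incompatible with a pointwise strict inequality. Hence $\sigma^* \phi(b) \equiv \phi(b)$, so the minimal solution inherits the symmetry as well.

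The point to be careful about, rather than a genuine obstacle, is that this reasoning applies \emph{only} to the distinguished (unique or minimal) solutions, and this is exactly why the corollary is restricted to them: a discrete $\sigma$ may well interchange two distinct solutions on the unstable branch, so that neither is individually invariant. This symmetry-breaking phenomenon, anticipated in the introduction and observed in the Bowen--York example, marks the essential difference from the continuous case of Proposition \ref{csym}, where the Killing-field integration argument forces invariance of \emph{every} stable solution.
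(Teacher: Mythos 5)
Your proof is correct and follows exactly the route the paper intends (the paper states the corollary as an ``immediate consequence'' of Theorem \ref{bp} without writing out a proof): the induced action $\phi \mapsto \sigma^*\phi$ preserves the solution set, so uniqueness at $b=b_*$ and minimality for $b<b_*$ force invariance. Your integral argument ruling out the strict inequality $\sigma^*\phi(b) > \phi(b)$ is the one step the paper leaves tacit, and it is handled correctly since an isometry preserves the volume form.
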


As typical for non-linear equations, we expect bifurcations to occur among
the set of solutions of (\ref{phi}).  While the detailed behaviour of this
set will depend on $g_{ij}$, $\Lambda$ and $\Omega$,  Premoselli's theorem indicates that
 $b$ plays a distinguished role as bifurcation parameter. The key values of $b$
 for  understanding the structure of the solutions are $b = 0$, and the ``critical'' values 
by which we mean those for which the linearised operator (\ref{lin}) has a
 non-trivial kernel. The latter is the case in particular at $b = b_*$, 
but in general (and in particular in the Bowen-York example in Sect. 3.3.3.) 
more such critical values will show up.

We first discuss $b = 0$. While  Premoselli's theorem
does not apply to this case, it is known that Equ. (\ref{phi})
has at least two  solutions:
\begin{description}
\item[$\mathbf {\phi > 0}$:] 
We first mention the special case $R= 2\Lambda$ where there is the trivial solution $\phi \equiv
1$; however, there are many more (Kottler-) solutions which we revisit in Sect.3.3.1. 
In the general case, Yamabe's theorem mentioned above  guarantees the existence of
at least one positive solution. We now observe that {\it all regular solutions are necessarily unstable} 
in the sense of Def. \ref{stab}; this follows from point 3. of Proposition
\ref{stest}, while point 2. shows that instability still holds for small 
$\wt \Omega = b \phi^{-6} \Omega_0$ and therefore small $b$. However, in
this context it is important to avoid an instructive catch:  
Premoselli's theorem asserts the {\it existence of at least one stable solution for all small enough  $b$} 
(and therefore, for small enough $\Omega = b \Omega_0$).   The key to resolving  this issue  is the same as in the 
Remark after Proposition \ref{stest}, namely a tilde: $\wt \Omega = \phi^{-6} \Omega = b
\phi^{-6} \Omega_0$. We are led to the conclusion that the conformal
factor $\phi$ which generates the stable branch of solutions from any SM 
must go to zero when $b \rightarrow 0$ in order to allow $\wt \Omega$ to violate the instability
condition $\wt \Omega^2 < \Lambda$ (or its integral). This leads us to the other solution of
(\ref{phi}) for $b = 0$, namely
\item[$\mathbf {\phi \equiv 0}$:] 
While useless as conformal factor, the above arguments indicate that this solution is the 
origin of the unique ``minimal'' branch of Premoselli's theorem. Proposition \ref{b0} 
confirms and clarifies this.  
\end{description}

In the following result the rescaling $\psi = b^{-1/4} \phi$
will be crucial. In terms of this variable, we obtain from (\ref{phi}) 
\begin{equation}
\label{psi}
\Psi(\psi) := - \left( \Delta - \frac{1}{8} R \right) \psi - \frac{b}{4}\Lambda \psi^5
- \frac{\Omega_0^2}{8 \psi^{7}} = 0.  
 \end{equation}
More precisely, (\ref{psi}) is equivalent to (\ref{phi}) only
 for $b > 0$, but we consider the former equation  for $b \ge 0$.
Note that the constant $b$ which controlled the size of the momentum term in (\ref{phi}) 
now scales the cosmological constant in (\ref{psi}).

 We also introduce the linearisation at some $\psi$, 
\begin{equation}
\label{linpsi}
L_{\psi} \gamma = - \left( \Delta - \frac{1}{8} R \right) \gamma - \frac{5 b}{4}\Lambda
\psi^4 \gamma + \frac{7 \Omega_0^2}{8 \psi^{8}} \gamma. 
\end{equation}

\begin{prop}
\label{b0}
For sufficiently small $b \ge 0$, the equation (\ref{psi})
 on a given SM $({\cal M},g_{ij}, K_{ij})$ has a unique, positive, strictly stable solution $\psi(b)$.
\end{prop}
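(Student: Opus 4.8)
The plan is to handle the limiting equation at $b=0$ first, where (\ref{psi}) loses its destabilising positive-power term and becomes a convex variational problem, and then to propagate the resulting solution to small $b>0$ by the implicit function theorem. By the conformal covariance established in Lemma \ref{stabil} (the equation (\ref{psi}) is just (\ref{phi}) rescaled, so the same argument applies), existence, positivity and stability of the solution depend only on the conformal class; hence I may fix the Yamabe representative of the SM and assume $R\equiv\mathrm{const}>0$, so that positivity of the Yamabe constant $Y$ makes $-\Delta+R/8$ a positive operator.

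First I would set $b=0$ and study the functional
\[
E_0[\psi] = \int_{\cal M}\left(\tfrac12|\nabla\psi|^2 + \tfrac{R}{16}\psi^2 + \tfrac{1}{48}\,\frac{\Omega_0^2}{\psi^6}\right)dv
\]
on positive functions, whose Euler--Lagrange equation is exactly (\ref{psi}) at $b=0$. The quadratic part is coercive and strictly convex because $R>0$, while $\psi\mapsto\psi^{-6}$ is convex, so $E_0$ is strictly convex. Coercivity plus weak lower semicontinuity (Fatou for the last, nonnegative, term) yields a minimiser; strict convexity makes it the \emph{unique} critical point; the barrier $\Omega_0^2/\psi^6$ together with the strong maximum principle forces it to be strictly positive (even where $\Omega_0$ vanishes); and elliptic regularity makes it smooth. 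Call this solution $\psi(0)$. Its strict stability is automatic from convexity, since the second variation is the operator (\ref{linpsi}) at $b=0$ and
\[
\int_{\cal M}\gamma\,L_{\psi(0)}\gamma\,dv = \int_{\cal M}\left(|\nabla\gamma|^2 + \tfrac{R}{8}\gamma^2 + \tfrac{7\Omega_0^2}{8\,\psi(0)^8}\gamma^2\right)dv \ge c\,\norm{\gamma}^2_{L^2}
\]
for some $c>0$, so the lowest eigenvalue of $L_{\psi(0)}$ is strictly positive.

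Next I would continue this solution to $b>0$. Viewing $\Psi$ from (\ref{psi}) as a smooth map $C^{2,\alpha}\times[0,\infty)\to C^{0,\alpha}$ near $(\psi(0),0)$ (smoothness is legitimate because $\psi(0)>0$), its partial derivative in $\psi$ at that point is precisely $L_{\psi(0)}$, a self-adjoint elliptic operator whose strictly positive lowest eigenvalue forces a trivial kernel and hence an isomorphism of the Hölder spaces. The implicit function theorem then produces a smooth branch $b\mapsto\psi(b)$ of solutions of (\ref{psi}) on an interval $[0,\delta)$; positivity persists by continuity with a uniform lower bound, and, since the lowest eigenvalue of $L_{\psi(b)}$ depends continuously on $b$, strict stability persists after shrinking $\delta$.

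The delicate point, which I expect to be the main obstacle, is the word \emph{unique}: global uniqueness among \emph{all} positive solutions fails, because under the rescaling $\psi=b^{-1/4}\phi$ the second (unstable) solution of Theorem \ref{bp} escapes as $\psi\sim b^{-1/4}\to\infty$ when $b\to0$. What is true, and what I would prove, is uniqueness of the strictly stable positive solution for small $b$. The implicit function theorem already gives local uniqueness near the branch; to upgrade this I would argue by contradiction using an a priori bound showing that any strictly stable positive solution of (\ref{psi}) stays bounded as $b\to0$ — the large solution being excluded precisely because its linearisation inherits the destabilising term $-\tfrac{5b\Lambda}{4}\psi^4$, in accordance with point 3 of Proposition \ref{stest} ($\wt\Omega\equiv0\Rightarrow\wt\varsigma=-\Lambda$). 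Any such bounded family subconverges to a positive $b=0$ solution, which by the first step must be $\psi(0)$, and local nondegeneracy then forces coincidence with $\psi(b)$ for small $b$, establishing uniqueness.
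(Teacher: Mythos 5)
Your skeleton is the same as the paper's: produce a nondegenerate positive solution of (\ref{psi}) at $b=0$, observe that the linearisation there is $-(\Delta-\tfrac18 R)+\tfrac{7\Omega_0^2}{8\psi_0^8}$, which is strictly positive because the Yamabe constant is positive, and then continue to small $b>0$ by the implicit function theorem. Where you genuinely diverge is the $b=0$ step: the paper simply invokes the sub- and supersolution method of \cite{JI} for existence and uniqueness, whereas you minimise a strictly convex functional. That buys you uniqueness at $b=0$ essentially for free, but it shifts the real difficulty to the strict positivity of the minimiser: on the zero set of $\Omega_0$ the singular term provides \emph{no} barrier, so a minimising sequence could a priori degenerate there, and one cannot take the first variation of $E_0$ at a function touching zero without justification. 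The standard repair is to show the minimiser is a weak solution away from its zero set (or to minimise over $\psi\ge\epsilon$ and pass to the limit) and then apply the strong minimum principle to $-\Delta+\tfrac{R}{8}$ with nonnegative source; this is doable but is exactly the subtlety that zeros of $\Omega_0$ introduce and that the cited sub/supersolution machinery is built to handle, so it deserves more than the one clause you give it. Your final paragraph on uniqueness is a genuine addition rather than a deviation: the paper's proof stops at ``the implicit function theorem yields the desired conclusion,'' which on its face gives only local uniqueness of the branch, and you are right that global uniqueness among all positive solutions is false (the unstable Premoselli solution has $\psi\sim b^{-1/4}$). Your proposed compactness argument -- an a priori bound on strictly stable solutions as $b\to0$, subconvergence to the unique $b=0$ solution, then nondegeneracy -- is the correct strategy for upgrading to uniqueness among stable solutions, but the a priori bound is asserted, not proved, and the test-function computation it requires (pairing the stability inequality against the equation) is not entirely routine; as written this remains a sketch of the one claim that goes beyond what the paper itself establishes.
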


\begin{proof}
For $b = 0$, it can be shown via the sub- and supersolution method \cite{JI} that
the resulting Lichnerowicz equation (\ref{psi}) has a unique, positive
 solution $\psi_0$. Next, strict stability follows readily from the linearisation  
\begin{equation}
\label{lin0}
L_0 \gamma = - \left( \Delta - \frac{1}{8} R \right) \gamma + \frac{7 \Omega_0^2}{8 \psi_0^{8}}
\gamma. 
\end{equation}
In particular, using that the Yamabe constant of ${\cal M}$ is positive, $L_0$ has a trivial kernel. 
This allows application of the implicit function theorem and indeed yields the desired conclusion.
\end{proof}

{\bf Remark.} For $b \ge 0$ we clearly recover here the beginning of the unique strictly 
stable minimal branch of solutions from point 1. of Premoselli's theorem.
Note that, for $b \rightarrow  0$, regularity of $\psi$ indeed entails  
$\phi \rightarrow 0$, as anticipated in the discussion above and in the
Remark after Proposition 1.   
Bounds on integral norms of $\phi$ in terms of $b$,$\lambda$, $\Lambda$ and
$\Omega_0$ can be derived via Equ. (\ref{OgV})  but will not be given here.

 We now turn to the critical values of $b$. 
As the analysis is slightly simpler in terms of the variable $\psi$ 
 compared to $\phi$, we continue working with (\ref{psi}) and its
 linearisation $(\ref{linpsi})$ rather than with 
the equivalent original Lichnerowicz equation (\ref{phi}).  

Here the only simple case  is the marginally stable (lowest eigenvalue zero) one, 
which arises in particular at the maximal value $b = b_*$. In this case a simple bifurcation
analysis leads to the following behaviour of the solutions:

\begin{prop}
\label{bstar}
Assume that $\psi_c$ is a marginally stable solution of (\ref{psi}) for some value $b_{c}$. 
Then there is a solution curve $(b(s), \psi(s))$   near $\psi_c$ which ``turns to the left'' 
(i.e. towards smaller values of $b$) at $(b_c, \psi_c)$ . This entails that there is
an $\epsilon > 0$ such that for all  $b \in (b_{\star} - \epsilon, b_{\star})$ there are precisely two solutions, 
at least one of which is strictly stable.   
\end{prop}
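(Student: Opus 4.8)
```latex
The plan is to recognize $(b_c,\psi_c)$ as a \emph{turning point} (fold) of
the solution set of $\Psi(\psi)=0$ and to extract the local structure via the
Crandall--Rabinowitz bifurcation-from-a-simple-eigenvalue machinery applied to
the map $F(b,\psi):=\Psi(\psi)$ (with $\psi$ ranging over a suitable H\"older
or Sobolev space). The key hypothesis is that $\psi_c$ is marginally stable, so
by the discussion in Sect.~2.1 the linearisation $L_{\psi_c}=D_\psi F(b_c,\psi_c)$
has lowest eigenvalue zero; since that eigenvalue is simple and its
eigenfunction $\zeta>0$ has no zeros, $\ker L_{\psi_c}=\mathrm{span}\{\zeta\}$
is one-dimensional, and self-adjointness gives
$\mathrm{range}\,L_{\psi_c}=\{\zeta\}^{\perp}$, of codimension one. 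These are
exactly the Fredholm conditions needed to parametrise the solution set near
$(b_c,\psi_c)$ by a single curve $s\mapsto(b(s),\psi(s))$ with
$\psi(0)=\psi_c$, $\psi'(0)=\zeta$, and $b(0)=b_c$.

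First I would check the transversality (non-degeneracy) condition that
distinguishes a fold from a pitchfork. Differentiating $F(b(s),\psi(s))=0$ once
gives $L_{\psi_c}\psi'(0)+b'(0)\,\partial_b F=0$; pairing with $\zeta$ and using
$L_{\psi_c}\zeta=0$ kills the first term, so
$b'(0)\,\langle\zeta,\partial_b F\rangle=0$. From (\ref{psi}),
$\partial_b F=-\tfrac14\Lambda\psi^5<0$, hence
$\langle\zeta,\partial_b F\rangle=-\tfrac14\Lambda\int_{\cal M}\psi_c^5\zeta\,dv\neq0$
since $\psi_c,\zeta>0$; therefore $b'(0)=0$, confirming a genuine turning
point rather than a transcritical crossing. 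The sign of $b''(0)$ then decides
the direction of the fold.

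The main computational step is to pin down $b''(0)$. Differentiating
$F(b(s),\psi(s))=0$ twice at $s=0$, using $b'(0)=0$, yields
\begin{equation}
\label{bfold}
L_{\psi_c}\psi''(0)+b''(0)\,\partial_b F
+\partial_{\psi}^2F\big[\zeta,\zeta\big]=0,
\end{equation}
where the second variation is read off from (\ref{psi}) as
$\partial_\psi^2 F[\zeta,\zeta]=-5b_c\Lambda\psi_c^3\zeta^2
-7\Omega_0^2\psi_c^{-8}\zeta^2$. Pairing (\ref{bfold}) with $\zeta$ again
annihilates the $L_{\psi_c}\psi''(0)$ term by self-adjointness, leaving
\begin{equation}
\label{b2}
b''(0)=\frac{\int_{\cal M}\big(5b_c\Lambda\psi_c^3+7\Omega_0^2\psi_c^{-8}\big)
\zeta^3\,dv}{\int_{\cal M}\tfrac14\Lambda\psi_c^5\zeta\,dv}.
\end{equation}
The hard part is the sign of the numerator: both bracketed terms are positive,
but $\zeta^3$ is \emph{not} sign-definite, so a naive bound fails. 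I would
therefore normalise $\zeta>0$ (the ground-state eigenfunction is one-signed, so
this is legitimate), which makes $\zeta^3>0$ pointwise; then numerator and
denominator are both strictly positive and $b''(0)>0$. A positive $b''(0)$ with
$b'(0)=0$ means $b(s)=b_c+\tfrac12 b''(0)s^2+O(s^3)$ has a strict local
\emph{maximum} at $s=0$, i.e.\ the curve turns towards smaller $b$, which is the
asserted ``turn to the left.''

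Finally I would translate the geometry of the fold into the count of solutions.
Since $b(s)$ has a nondegenerate maximum $b_c$, for each $b$ slightly below
$b_c$ there are exactly two nearby parameter values $s_\pm(b)$, hence two
solutions $\psi(s_\pm(b))$; choosing $\epsilon>0$ small enough that the
Crandall--Rabinowitz parametrisation is valid gives precisely two solutions for
$b\in(b_\star-\epsilon,b_\star)$ (with $b_c=b_\star$ in the application to the
maximal value). Stability follows by continuity of the lowest eigenvalue along
the curve: it vanishes at $s=0$ and, because the branch folds with $b''(0)>0$,
one side of the fold carries a strictly positive lowest eigenvalue while the
other carries a strictly negative one, so at least one of the two solutions is
strictly stable, as claimed.
```
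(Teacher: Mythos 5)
Your route is the same as the paper's: recognize $(b_c,\psi_c)$ as a fold, verify the Crandall--Rabinowitz hypotheses (simple zero eigenvalue with positive ground state $\zeta$, transversality via $\int_{\cal M}\psi_c^5\zeta>0$), and compute the second derivative of $b(s)$ along the solution curve. The structure is sound, but your computation contains two sign errors that happen to cancel. Pairing your second-order relation with $\zeta$ gives $b''(0)\langle\zeta,\partial_bF\rangle=-\langle\zeta,\partial_\psi^2F[\zeta,\zeta]\rangle$; since $\partial_bF=-\tfrac14\Lambda\psi^5<0$ and $\partial_\psi^2F[\zeta,\zeta]=-\bigl(5b_c\Lambda\psi_c^3+7\Omega_0^2\psi_c^{-9}\bigr)\zeta^2<0$ (note also the exponent is $-9$, not $-8$), the correct result is
\begin{equation*}
b''(0)=-\,\frac{\int_{\cal M}\bigl(5b_c\Lambda\psi_c^3+7\Omega_0^2\psi_c^{-9}\bigr)\zeta^3\,dv}{\tfrac14\Lambda\int_{\cal M}\psi_c^5\,\zeta\,dv}\;<\;0,
\end{equation*}
in agreement with the paper's Equ.~(\ref{beta}). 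Your stated conclusion that $b''(0)>0$ is wrong, and your subsequent claim that ``a positive $b''(0)$ with $b'(0)=0$ gives a strict local maximum'' is also wrong (it would give a minimum, i.e.\ a fold to the right). The two mistakes compensate, so your final geometric statement --- the curve turns towards smaller $b$ --- is correct, but both intermediate assertions need to be fixed. Your closing remark on the exchange of stability across the fold is asserted rather than proved, but this matches the level of detail in the paper, which likewise leaves that point to the Crandall--Rabinowitz stability theory and to Premoselli's theorem.
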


\begin{proof}
The requirements of the Crandall-Rabinowitz theorem in the form  Thm. 3.2. of \cite{CR} are as follows:
\begin{enumerate}
\item
The kernel $\upsilon_c$ of the linearisation $L_c$ defined in (\ref{linpsi}), 
and of its adjoint, are one-dimensional at the critical solution $\psi_c$. 
 \item
The derivative $d \Psi /db|_{b_{c}}$ of the Lichnerowicz operator (\ref{psi})
is not in the range of the linearised operator at $b_{c}$. 
\end{enumerate}
Now 1.  follows from the assumption of marginal stability and the fact that $L_c$ is self adjoint in
the present case. Proving 2. is equivalent to 
showing that 
 \begin{equation}
\label{Lgam}
L_c \gamma = \frac{1}{4} \psi_c^5
\end{equation}
has no solutions. Assuming the contrary and using the fact that the ground state eigenfunction 
$\upsilon_c$ can be chosen  to be positive, we indeed obtain the contradiction
\begin{equation} 
0 = \int_{\cal M} \gamma L_c \upsilon_c =  \int_{\cal M} \upsilon_c  L_c \gamma = \frac{1}{4}
\int_{\cal M}  \psi_c^5 \upsilon_c > 0. 
\end{equation}  
From the Theorem we conclude that near the bifurcation point $(b_c,\psi_c)$ there is a curve of solutions
$(b(s),\psi(s))=(b_c + \beta(s),\psi_c + s \upsilon_c + \tau(s))$ with $\beta(0)=\beta'(0)=
\tau(0)=\tau'(0)=0$. To compute $\beta''(0)$ we differentiate 
Eq.
(\ref{psi}) twice with respect to $s$ and evaluate at $s=0$
\begin{equation}
\label{psidp}
  L_c \psi_c''  - \left(7 \Omega_0^2 \psi_c^{-9} + 5 b_c \Lambda \psi_c^3 \right) \,
 \upsilon_c^2 - \frac{\Lambda \beta''(0)}{4} \upsilon_c^5 =0\,.
  \end{equation}
Multiplying this equation by $\upsilon_c$, Equ. $L_c \upsilon_c = 0$ by
 $\psi_c'' $ and subtracting we obtain
\begin{equation}
\label{beta}
  \beta''(0)=-\frac{4 \int_{\cal M} \left(7 \Lambda ^{-1} \Omega_0^2 \psi_c^{-9} + 5 b_c  \psi_c^3 \right) \, 
\upsilon_c^3\,}{ \int_{\cal M} \psi_c^5\, \upsilon_c } <0.
\end{equation}
Thus,  $(b_c,\psi_c)$ is the turning point at which the curve of solutions turns to the left.
\end{proof}
Regarding the behaviour of the solution curve near general critical values $b_c$
(i.e. with negative lowest eigenvalue), it depends largely on the precise form of the equation. 
We proceed with discussing examples.  

\section{$(t,\varphi)$-symmetric seed manifolds}

\subsection{Angular momentum}

\label{ang}

We recall here standard material on axial symmetry (AS) and on the angular momentum of
compact 2-surfaces of spherical topology.
 A SM  $({\cal M},g_{ij}, K_{ij})$ is AS iff
the circle group acts effectively on ${\cal M}$ and its set of fixed points is non-empty.
This implies the existence of a Killing field ${\eta_i}$ 
with fixed points along an axis, such that
\begin{equation}
{\cal L}_{\eta} g_{ij} = {\cal L}_{\eta} {K_{ij}} = 0. 
\end{equation}
The angular momentum $J$ of a compact 2-surface ${\cal S}$ in an AS SM 
is given by
\begin{equation}
\label{am}
J = \frac{1}{8\pi} \int_{\cal S} K_{ij} \eta^{i} dS^{j}.
\end{equation}
Since our Definition \ref{seed} of a SM contains the requirement that $K^{ij}$ is divergence-free, 
all homologous 2-surfaces have the same angular momentum. 
This implies that, in order for $J$ to be non-zero, the homology group $H_2({\cal M})$ must be non-trivial.
We also note that the above definition of $J$ is conformally invariant.

When the SM is AS, so are the stable solutions of (\ref{phi}) by Proposition \ref{csym} above. 
The same then applies to the ID and, by standard ADM  evolution, to the evolving spacetime 
$(\wt {\cal N}, \wt g_{\mu\nu})$ ~$(\mu,\nu = 0,1,2,3)$. Any AS spacetime satisfies
\begin{equation}
\label{Ein}
\wt R_{\mu\nu} = \Lambda \wt g_{\mu\nu} \qquad {\cal L}_{\wt \eta} \wt g_{\mu\nu} = 0
\end{equation}
where $\wt R_{\mu\nu}$ the Ricci tensor of $\wt g_{\mu\nu}$ and $\wt
\eta^{\mu}$ is the spacetime Killing vector.

The angular momentum and its properties can alternatively be discussed in terms of spacetime quantities. 
In particular, the definition (\ref{am}) now reads

\begin{equation}
\label{Komar}
J = \frac{1}{8\pi} \int_{\cal S} \widetilde \nabla^{\mu} \widetilde \eta^{\nu}
\widetilde{  d S_{\mu\nu}}
\end{equation}   

where $\wt \nabla^{\mu}$ denotes the covariant derivative of $\wt g_{\mu\nu}$, 
and $\wt {d S_{\mu\nu}}$ is the volume element of ${\cal S}$. 
From (\ref{Ein}), the integrand of (\ref{Kom}) satisfies 
\begin{equation}
\label{div}
\wt \nabla_{\mu} \left( \wt \nabla^{\mu} \wt \eta^{\nu} \right) = - \wt
R^{\nu}_{~\mu} \wt \eta^{\nu} = - \Lambda \wt \eta^{\nu}.
\end{equation}

We now recover the spacetime version of the invariance result for $J$:  By Gauss' theorem, (\ref{div}) 
implies that all 2-surfaces ${\cal S}$ which are homologous and bound an AS 3-surface 
have the same angular momentum.  

We next introduce the twist vector $\wt \omega_{\mu}$ of the Killing vector $\wt \eta^{\mu}$
\begin{equation}
\label{twi}
\widetilde \omega_{\mu} = \wt \epsilon_{\mu\nu\sigma\tau} \widetilde \eta^{\nu}
\widetilde \nabla^{\sigma} \widetilde \eta^{\tau}
\end{equation}   
where  $\wt \epsilon_{\mu\nu\sigma\tau}$ is totally antisymmetric 
and $\wt \epsilon_{0123} = \sqrt{|det \, \wt g_{\mu\nu}|}$.
 $\widetilde \omega_{\mu}$ is curl-free by virtue of (\ref{Ein}), i.e. $\widetilde \nabla_{[\mu} \widetilde \omega_{\nu]} =
0$. Hence there exists locally a twist potential  $\wt \omega$, defined up to a constant, 
such that $\wt \omega_{\mu} =  \widetilde \nabla_{\mu} \widetilde \omega$.

For an AS 2-surface $ {\cal S} \subset {\cal M}$ of spherical topology, 
the twist potential allows the following reformulation of the angular momentum
(equivalent to (\ref{am}) and (\ref{Komar})),
\begin{equation}
\label{J}
J = \frac{1}{8} \left[ \omega(N) - \omega(S) \right]
\end{equation}
where $N$ and $S$ are the poles of ${\cal S}$.

\subsection{$(t,\varphi)$-symmetric seed manifolds}

Bardeen \cite{JB} investigated data for rotating stars which, in terms of particle physics terminology, enjoy a PT-invariance, 
i.e. their evolution is invariant under the simultaneous change of time and spin  direction. 
Following Dain \cite{SD} and Dain et al. \cite{DLT} who systematically investigated such
seed manifolds we call them  $(t,\varphi)$-symmetric. This construction can be summarized as follows.

\begin{defn}
\label{tp1}

An AS SM $({\cal M},g_{ij}, K_{ij})$ 
is called $(t,\varphi)$-symmetric (TPSM) if 
\begin{enumerate}
\item
The axial Killing field $\eta_i$ is hypersurface orthogonal,
 i.e. $\epsilon_{ijk} \eta^i \nabla^j \eta^k = 0$ where 
 $\epsilon_{jkl}$ is totally antisymmetric
and $\epsilon_{123} = \sqrt{det \,g_{ij}}$.
\item 
$K^{ij}$ satisfies
\begin{equation}
K_{ij}\eta^i \eta^j = 0 ~~\mbox{and} ~~ K_{ij}q^{ik} q^{jl} = 0 
\end{equation}
where $q_{ij} = g_{ij} - \eta^{-1} \eta_{i} \eta_{j}$.
\end{enumerate}
\end{defn}

We now state a well-known result which yields an alternative formulation of a TPSM.  

\begin{prop}
\label{tp2}
~
\begin{enumerate}
\item
Let  $({\cal M},g_{ij},K_{ij})$ be a TPSM. 
Then there exists  a smooth scalar function $\omega$ such that
\begin{enumerate}
\item
 The axial Killing field $\eta_i$ leaves $\omega$ invariant, i.e.
 $\eta^l \nabla_l \omega = 0$, and 
\item The extrinsic curvature
\begin{equation}
\label{Kom}
 K^{ij}  = \frac{1}{\eta^2} \eta^{(i} \epsilon^{j) kl} \eta_k \nabla_{l} \omega   
\end{equation}
with $\eta =  \eta^k \eta_k$ is smooth everywhere, in particular on the axis.
\end{enumerate}
\item
Conversely, let $({\cal M},g_{ij})$ be a manifold of positive Yamabe type
such that
\begin{enumerate}
\item
$({\cal M}, g_{ij})$ is AS with hypersurface-orthogonal Killing vector $\eta^i$.
\item
There is a smooth function $\omega$ which satisfies 1.(a), and  $K_{ij}$ defined by (\ref{Kom}) 
satisfies 1.(b). 
\end{enumerate}
Then $({\cal M},g_{ij}, K_{ij})$ is a TPSM.  

\end{enumerate}

\end{prop}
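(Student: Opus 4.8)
The plan is to prove both implications from one structural observation: the two algebraic conditions in Definition \ref{tp1} force $K_{ij}$ into a ``mixed'' normal form, after which the momentum constraint collapses to a two‑dimensional divergence equation whose solution is the potential $\omega$. Throughout I would exploit that hypersurface orthogonality of $\eta^i$ together with axial symmetry lets me choose coordinates $(\varphi,x^a)$, $a=1,2$, in which $\eta^i=\delta^i_\varphi$ and the metric is block diagonal, $g=\eta\,d\varphi^2+h_{ab}\,dx^a dx^b$ with $\eta=\eta^k\eta_k$ and $h,\eta$ independent of $\varphi$; equivalently the distribution orthogonal to $\eta$ descends to a two‑dimensional orbit space $(\bar{\cal M},h)$ carrying $\eta$ as a function.

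For the forward direction I decompose $K$ with respect to the projector $q_{ij}=g_{ij}-\eta^{-1}\eta_i\eta_j$,
\[
K_{ij}=\eta^{-2}(K_{kl}\eta^k\eta^l)\,\eta_i\eta_j+2\eta^{-1}\eta_{(i}W_{j)}+q_i{}^{k}q_j{}^{l}K_{kl},\qquad W_j:=K_{jl}\eta^l .
\]
The two conditions $K_{ij}\eta^i\eta^j=0$ and $K_{ij}q^{ik}q^{jl}=0$ annihilate the first and third terms, leaving $K_{ij}=\eta^{-1}(\eta_iW_j+\eta_jW_i)$ with $W_j\eta^j=0$; this form is automatically trace free. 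Since the SM is axially symmetric, ${\cal L}_\eta g={\cal L}_\eta K=0$ gives ${\cal L}_\eta W=0$, so $W$ descends to $\bar{\cal M}$ and $\omega$ will inherit property 1.(a). Next I reduce $\nabla_iK^{ij}=0$: contracting with $\eta_j$ and using the Killing identity $K^{ij}\nabla_i\eta_j=0$ yields $\eta_j\nabla_iK^{ij}=\nabla_iW^i$, so the $\eta$‑component of the constraint is exactly $\operatorname{div}W=0$. The components orthogonal to $\eta$ vanish automatically, because the reflection $\varphi\mapsto-\varphi$ is an isometry of the block‑diagonal $g$ under which the mixed tensor $K$ is odd, hence $\operatorname{div}K$ is odd, while its orbit‑space components are $\varphi$‑independent and even and must therefore vanish. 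Thus the whole constraint is equivalent to $\partial_a(\sqrt{\eta}\sqrt{h}\,W^a)=0$ on $\bar{\cal M}$, which in two dimensions says the $1$‑form $\star_{\eta h}W^{\flat}$ is closed, so locally $W^a=\tfrac12\epsilon^{a\varphi b}\partial_b\omega$ for a function $\omega$; unwinding the $\epsilon$ and $\eta$ factors this is precisely (\ref{Kom}).

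For the converse $\omega$ is given and I would simply verify that $K$ from (\ref{Kom}) is a TPSM. Writing $P^j:=\epsilon^{jkl}\eta_k\nabla_l\omega$, antisymmetry gives $P^j\eta_j=0$, so $K^{ij}=\eta^{-2}\eta^{(i}P^{j)}$ is symmetric, trace free, and satisfies $K_{ij}\eta^i\eta^j=0$ and $K_{ij}q^{ik}q^{jl}=0$. Divergence freeness is the only nontrivial point: by the same contraction identity $\operatorname{div}K$ reduces to $\operatorname{div}W=0$ with $W^a\propto\epsilon^{a\varphi b}\partial_b\omega$, and $\partial_a(\sqrt{\eta h}\,W^a)\propto\epsilon^{ab}\partial_a\partial_b\omega=0$ identically. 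Hence, on a positive‑Yamabe, axially symmetric, hypersurface‑orthogonal manifold, $K$ is a smooth symmetric trace‑ and divergence‑free tensor satisfying 1.(b), i.e.\ $({\cal M},g_{ij},K_{ij})$ is a TPSM.

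The genuine difficulties are the two points treated only locally above. First, upgrading the local stream function to a globally defined $\omega$ requires the closed $1$‑form to have vanishing periods, so the topology of $\bar{\cal M}$ enters; the relevant periods are exactly the fluxes measuring angular momentum, consistent with $8J=\omega(N)-\omega(S)$ in (\ref{J}). Second, and most delicate, is regularity on the axis $\{\eta=0\}$: the factor $\eta^{-2}$ in (\ref{Kom}) means smoothness of $K$ across the axis is equivalent to precise boundary behaviour of $\omega$ (constancy on each axis component together with the correct vanishing rate of its normal derivative). In the forward direction this must be read off from the assumed smoothness of $K$, and in the converse it is exactly hypothesis 1.(b); matching the local expansions of $\eta$, $h$ and $\omega$ near the axis to confirm this equivalence is the step I expect to require the most care.
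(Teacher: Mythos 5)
Your proposal is correct and follows essentially the same route as the paper, whose entire proof is the one line ``Simple calculations and application of the Poincar\'e Lemma, cf.\ \cite{SD,DLT}'': you carry out precisely those calculations (projector decomposition of $K_{ij}$, reflection argument for the orbit-space components of the momentum constraint, reduction to a two-dimensional divergence, Poincar\'e lemma), and you rightly flag the vanishing of periods and the axis regularity as the points the paper delegates to the references. One small correction: the flux $8J=\omega(N)-\omega(S)$ is computed along an arc joining the two axis components of the orbit space and is therefore \emph{not} a period obstruction to a global $\omega$; the genuine periods sit on the closed loops around the $\mathbb{S}^1$ factor (fluxes of $K_{ij}\eta^i$ through the tori $\theta=\mathrm{const}$), which is where the topology of ${\cal M}$ actually enters.
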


\begin{proof}
Simple calculations and application of the Poincar\'e Lemma, cf.
\cite{SD,DLT}. 
\end{proof}

\newpage
{\bf Remarks.}~
\begin{enumerate}
\item 
In contrast to $\eta^i$ which is hypersurface orthogonal (w.r.t. a foliation of 2-surfaces) 
by definition of TPSM, the spacetime Killing vector $\widetilde \eta^{\alpha}$ which arises from
the corresponding data is no longer hypersurface orthogonal 
(w.r.t. a foliation of 3-surfaces) in general. 
\item
The twist potential $\wt \omega$ was defined in Sect. \ref{ang} for all 
AS spacetimes. 
If such spacetimes arise from TPSM generated by the scalar function $\omega$
via (\ref{Kom}), it can be shown that the restriction of $\widetilde \omega$ to 
the initial surface ${\cal M}$ coincides with $\omega$,  
provided of course that the respective additive constants are adapted. 
This justifies the synonymous notation.     

\item In a coordinate system $(\rho, \theta, \phi)$  where $\eta^i = \partial/\partial \varphi$
and the metric $g_{ij}$ is diagonal,  TPSM have
 $K_{\rho\varphi}$ and $K_{\theta \varphi}$ as  only non-vanishing components of
 $K_{ij}$.

\end{enumerate}

For a TPSM we  easily obtain from (\ref{Kom}) 
\begin{equation}
\label{Om}
\Omega^2 = K_{ij}K^{ij} = \frac{|\nabla \omega |^2}{2 \eta^2}.
\end{equation}

This is the key input for the Lichnerowicz equation in the 
subsequent applications.

From now on we specify the SM $({\cal M}, g_{ij})$ to have topology
$\mathbb{S}^2 \times \mathbb{S}$.
For all AS SM and ID of this topology, we recall from Sect.
(\ref{ang}) that the angular momentum does not depend on the selected 
$\mathbb{S}^2$-surface; one can therefore use the term ``angular momentum of the SM (ID)'' 
instead of  $\mathbb{S}^2$.

\subsection{The round hypertorus.}

Here we restrict ourselves to the metric 
\begin{equation}
\label{Nar}
ds^2 = \frac{1}{\Lambda} \left( d\alpha^2 + d \theta^2 + \sin^2 \theta d\varphi^2 \right)  
\end{equation}
where $\alpha \in [0, T]$ ``goes around'' the $\mathbb{S}$-direction. This metric obviously has $O(2) \times O(3)$ 
as isometry group.
Equ. (\ref{Nar}) is also the induced metric on a time symmetric slice of the Nariai spacetime \cite{HN}.
We now consider possible choices for $\omega$ and $\Omega$ on this background. 

\subsubsection{$\Omega \equiv 0$} 

We first recall from Sect. 2.3. that, on arbitrary backgrounds $({\cal M}, g_{ij})$, 
the Lichnerowicz equation (\ref{phi}) reduces for $\Omega \equiv 0$ to the Yamabe
equation. This is the equation which minimizes the Yamabe functional
(\ref{Yam}), and the corresponding solutions determine data with constant scalar curvature 
$ \wt R = 2 \Lambda$. 
For the present background (\ref{Nar}) with $R = 2\Lambda$,
the Yamabe problem  has been studied thoroughly \cite{RC,HVA,RS,CS})
 due to its simplicity, but also due to interesting degeneracy properties. 
We review the results here.  
 
While $\phi = \phi_0 \equiv 1$ is of course a solution of (\ref{phi}), there are
in addition $k$ positive solutions  $\phi_j$ iff $T \in (2\pi k, 2\pi (k+1)]$ ($k \in \mathbb{N}_0, 0 \le j \le
k$). For $j \ge 1$ these solutions are periodic in $\alpha$ with periods
$P(m,\Lambda) =  T/j$ (cf. (\ref{per}) below). 
The resulting metrics $\phi_j^4 g_{ik}$ are known as time-symmetric 
data for the Kottler (Schwarzschild-de Sitter)  spacetime which contain $j$
pairs of horizons (each pair consisting of a ``cosmological'' and a ``black hole'' one).
Explicitly, $\phi$ can be determined from the standard form of the
1-parameter family of time-symmetric Kottler data via 

\begin{equation}
\label{Kot}
 \wt ds^2_K  =
r^2 ( \frac{dr^2}{\sigma} + d\theta^2 +
 \sin^2 \theta d\varphi^2 )
 = \underbrace{r(\alpha ,m, \Lambda)^2}_{\phi^4/\Lambda} ( d \alpha^2 +
 d\theta^2 + \sin^2 \theta).
\end{equation}

Here $m \in \mathbb{R}$, $m < 1/(3\sqrt{3})$, $\sigma = (r^2 - 2mr - \Lambda r^4/3)$, 
and the horizons $r_b$ and $r_c$ are located at the positive zeros of $\sigma$. 
Obviously, the coordinates are related  via
\begin{equation}
\label{alp}
\frac{d \alpha}{dr} = \sigma^{-1/2}   \qquad  r(\alpha = 0) = r_b.
\end{equation}
Integrating over the circle and requiring that the period $P(m,\Lambda)$ 
fits on the hypertorus gives 
\begin{equation} 
\label{per}
P(m,\Lambda) := 2 \int_{r_b}^{r_{c}} \sigma^{-1/2}dr  = T/j \qquad j \in \mathbb{N}.
\end{equation}

This way the parameter $m$ acquires a dependence on $T/j$, and therefore the same applies to $\phi = \phi_j$. 
    
All such data have Ricci scalar  $\wt R = 2\Lambda$ but the manifolds  $(\wt {\cal M},
\phi_j^4 g_{ik})$ have different volumes, and the Yamabe constant $Y = \inf (\wt R \wt V^{2/3}) = 
2 \Lambda  \min_j \wt V_j ^{2/3}$ is ``realized'' by the manifold of minimal volume. 
For $T \le 2\pi$,  this is necessarily (\ref{Nar}) as $\phi = \phi_0 \equiv 1$ is the only solution of $(\ref{phi})$,
while for all $T > 2\pi$ the metric $\phi_j^4 g_{ik}$ with minimal volume
always turns out  to be $\phi_1^4 g_{ik}$. 

Note that the Lichnerowicz equation is independent of $\alpha$ while all  solutions except for $\phi \equiv 1$ do depend on it. 
In other words, we have here a simple example of  ``symmetry breaking''. In terms of the stability classification 
Def.\ref{stab}. we find that both (\ref{Nar}) as well as all Kottler data are unstable, 
in consistency with Propositions \ref{stest}. and \ref{csym}.  As already mentioned in Sect. 2.1.
the stability classification should be considered as a mathematical tool 
rather than interpreted physically. 

\subsubsection{$\Omega = const.$} 

An exhausitve analysis of the solutions of the Lichnerowicz equation 
in this case has recently been obtained by Chru\'sciel and Gicquaud \cite{CG}.
In particular, it has been pointed out in Thm 3.1. there that
the results of \cite{JLX} imply that all solutions of (\ref{phi}) are $O(3)$-symmetric, i.e. they only depend on $\alpha$.

The assumption $\Omega = const.$ leads to interesting problems regarding the bifurcation 
structure of solutions. However, it is incompatible with the $(t,\varphi)$-symmetric scheme described
in Definition \ref{tp1} on which we focus in this work.
To see this, we integrate  (\ref{Om}) with the present assumptions which
gives $\omega = 4J(\sin 2\theta - 2 \theta)/\pi$, $\Omega^2 = 128 J^2 \Lambda^3/\pi^2$
and  via (\ref{Kom}) a second fundamental form whose only non-vanishing
component is
\begin{equation}
K_{\alpha\varphi} = \frac{8 J \Lambda^{1/2} \sin \theta}{\pi}.
\end{equation}
This tensor is singular on the axis, however. 
We therefore consider different choices of $\omega$. 
  
\subsubsection{Bowen-York data} 

The standard setting for Bowen-York data is a flat SM $({\cal M}, g_{ij}, K_{ij})$ with 

\begin{equation}
\label{BY0}
K_{ij} = \frac{6}{r^3} \epsilon_{kl(i} J^k n^l n_{j)} 
\end{equation}
where $n^i$ is a radial unit vector, $r$ is the radius on $\mathbb{R}^3$,
and the  constant angular momentum  vector $J_i = J \delta_{zi}$ points in the
Cartesian $z$ direction \cite{RB,BY}.
Alternatively this second fundamental form can be constructed via (\ref{Kom}) from the
AS function  
\begin{equation}
\label{BY1}
\omega = - 2J (\cos^3 \theta - 3 \cos\theta)
\end{equation}
where $J$ is in fact the standard angular momentum, as follows from (\ref{J}).
We now carry these definitions over to the SM (\ref{Nar}) on $\mathbb{S}^2
\times \mathbb{S}$, replacing (\ref{BY0}) by
\begin{equation}
\label{BY2}
K_{ij} = 6 \Lambda^{\frac{3}{2}} \epsilon_{kl(i} J^k n^l n_{j)}. 
\end{equation}
Here the radial unit vector $n^i$ is orthogonal to $\mathbb{S}^2$ 
(pointing in the  ``$\alpha$-direction'' in the coordinates (\ref{Nar})),  while 
the  vector $J^i$ now reads $\Lambda^{1/2} J (\cos \theta \sin \theta, 0)$.
In terms of the construction described in Proposition \ref{tp2}., the
generating function still reads as above, namely (\ref{BY1}).

Inserting in (\ref{Kom}) shows that the only non-vanishing component of the second fundamental form is 
\begin{equation}
\label{K}
K_{\alpha\varphi} = 3 J \Lambda^{1/2} \sin^2 \theta.
\end{equation}
This tensor is indeed smooth on the axis, for the same reasons which yield
smoothness of the metric (\ref{Nar}). 

It now follows from (\ref{K}) that
\begin{equation}
\label{OBY}
\Omega^2  = K_{ij}K^{ij} =  \frac{|\nabla \omega|^2}{2 \eta^2} =  18 J^2 \Lambda^3 \sin^2 \theta
= 8 b^2 \Lambda \sin^2 \theta 
\end{equation} 
where we have set $b = 3 J \Lambda/2$.

Inserting (\ref{Nar}) and (\ref{OBY}) into (\ref{phi})  we
are left with 

\begin{equation}
\label{LichBY}
\frac{\Phi(\phi)}{\Lambda} =  - \underbrace{ \frac{\partial^2 \phi}{\partial \alpha^2}}_{A \phi}
- \underbrace{\left({}^2\Delta  -\frac{1}{4}\right) \phi}_{B \phi} - 
\frac{1}{4}\phi^5 - \frac{b^2 \sin^2 \theta}{\phi^7} = 0  
\end{equation}
where ${}^2\Delta$ is the Laplacian on the round $\mathbb{S}^2$.
The corresponding linearised operator around some $\phi$ reads  

\begin{equation}
\label{LinBY}
\frac{L_{\phi}}{\Lambda}\gamma = - \left(A + B  + \frac{5}{4}\phi^4 - \frac{7 b^2 \sin^2
\theta}{\phi^8}\right)
\gamma. 
\end{equation}

We first adapt the (non-)existence result Theorem \ref{t1} to this example.
We obtain
   
\begin{thm} ~\\
\vspace*{-5mm} 
\begin{enumerate}
\item If $T \le 2\pi$ and  $ \Lambda |J|  < 0.05$, Equ. (\ref{LichBY}) has a smooth, positive  solution. 
The same applies if $T \ge 2\pi$ and $ \Lambda |J|  < 2.01/T^2$. 
\item If $ \Lambda |J| > 0.165$, Equ. (\ref{LichBY}) has no smooth, positive solution. 
\end{enumerate}
\end{thm}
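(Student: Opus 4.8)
The plan is to apply Theorem~\ref{t1} to the round hypertorus directly. The geometric input is read off from (\ref{Nar}) and (\ref{OBY}): the seed has constant curvature $R = 2\Lambda$, volume $V = \int_{\cal M}\sqrt{g}\,d^3x = 4\pi T\Lambda^{-3/2}$, and $\Omega^2 = 18 J^2 \Lambda^3 \sin^2\theta$. With these in hand both assertions reduce to evaluating a single angular integral and solving an inequality for $\Lambda|J|$; the only genuinely delicate quantity will be the Yamabe constant $Y$ that enters the existence criterion (\ref{ex}).

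I would dispose of the non-existence statement first, as it is insensitive to the topology parameter $T$. Writing $\Omega = \sqrt{18}\,|J|\,\Lambda^{3/2}\sin\theta$, one finds $\int_{\cal M}\Omega^{5/6}\,dv = 18^{5/12}|J|^{5/6}\Lambda^{-1/4}\,(2\pi T)\int_0^\pi \sin^{11/6}\theta\,d\theta$, the last factor being an explicit Beta-function value. Substituting $R=2\Lambda$ and $V$ into the right-hand side of (\ref{noex}), the factor $T$ and all powers of $\Lambda$ cancel between the two sides, and the criterion collapses to an inequality of the shape $(\Lambda|J|)^{5/6} > \mathrm{const}$. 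Evaluating that constant yields precisely the threshold $\Lambda|J| > 0.165$.

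For existence I would compute $\int_{\cal M}\Omega^2\,dv = 48\pi J^2\Lambda^{3/2}T$ (using $\int_0^\pi\sin^3\theta\,d\theta = 4/3$) and insert it, together with $R$, $V$ and $Y$, into (\ref{ex}). In the range $T\le 2\pi$ the metric (\ref{Nar}) is itself the Yamabe minimizer (Sect.~3.3.1), so $Y = RV^{2/3} = 2(4\pi T)^{2/3}$; after substitution the $T$-dependence cancels identically and one is left with $\Lambda|J| < 1/\sqrt{384}\approx 0.05$. For $T\ge 2\pi$ the same algebra produces a bound proportional to $Y^6/T^4$, i.e. of the form $\Lambda|J| < c/T^2$.

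The hard part will be pinning down $Y$ once $T > 2\pi$. There the round metric ceases to minimize the Yamabe functional (\ref{Yam}): by the analysis recalled in Sect.~3.3.1 the minimizer is the Kottler solution $\phi_1$, whose volume is controlled only implicitly, through the period relation (\ref{per}), rather than given in closed form. Hence $Y$ is no longer explicit and must be estimated uniformly in $T$; taking its value at the Nariai endpoint $T=2\pi$, where $\phi_1$ merges with the round metric and $Y = 2(8\pi^2)^{2/3}$, reproduces the stated coefficient $2.01/T^2 = \pi^2/(\sqrt{24}\,T^2)$. Justifying such an estimate across the whole range $T\ge 2\pi$ is the one step requiring input beyond a direct substitution into Theorem~\ref{t1}; the remaining estimates of the angular integrals are elementary.
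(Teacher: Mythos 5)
Your proposal is correct and follows essentially the same route as the paper: a direct substitution of $R=2\Lambda$, $V=4\pi T\Lambda^{-3/2}$, $\Omega^2=18J^2\Lambda^3\sin^2\theta$ and the Yamabe constant $Y=8\pi^{4/3}(T/2\pi)^{2/3}$ (for $T\le 2\pi$), respectively $Y\ge 8\pi^{4/3}$ (for $T\ge 2\pi$), into Theorem~\ref{t1}, which the paper itself only sketches, deferring the ``lengthy calculation'' to \cite{SP}. Your numerical thresholds $1/\sqrt{384}\approx 0.05$, $\pi^2/(\sqrt{24}\,T^2)\approx 2.01/T^2$ and $0.165$ all check out, and the one step you flag as nontrivial --- the uniform lower bound on $Y$ for $T>2\pi$ --- is exactly the ingredient the paper also invokes without proof.
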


\begin{proof}
From Theorem \ref{t1} via a lengthy calculation, which uses 
$Y = 8\pi^{4/3} (T/2\pi)^{2/3}$ for $T\le 2\pi$ and 
$Y \ge 8\pi^{4/3}$ for $T\ge 2\pi$ in point 1. See \cite{SP} for details. 
\end{proof}

We now apply Premoselli's theorem and our results on symmetry and stability to
Equs. (\ref{LichBY}) and (\ref{LinBY}). 
To obtain a complete picture of the bifurcation structure 
of the solutions we have to  resort partially to numerical methods. 
  We first determine the ``principal'' branches of solutions which only depend on $\theta$
and which are equatorially symmetric, i.e. we assume $\phi = \phi(\theta)$ and 
$\phi(\theta) = \phi(\pi - \theta)$. Due to the symmetries of the SM,
and by virtue of Proposition \ref{csym}, we know that this class will include the unique 
stable, minimal  branch whose existence is guaranteed by Thm. \ref{bp}.
In order to regularise this branch near $b=0$ we now adopt the substitution 
$\psi =b^{-1/4} \phi$ introduced already in (\ref{psi}).
This gives  
 \begin{equation}
\label{phipsi}
 B \phi
 +\frac{1}{4} \phi^5  + \frac{b^2 \sin^2 \theta}{\phi^{7}}=0,
 \hspace*{4cm} B \psi +  \frac{b}{4} \psi^5  + \frac{\sin^2 \theta}
 {\psi^{7}}  = 0.  
\end{equation}

\begin{figure}[h!]
\hspace*{-1cm}
\includegraphics[angle=0,totalheight=6.2cm]{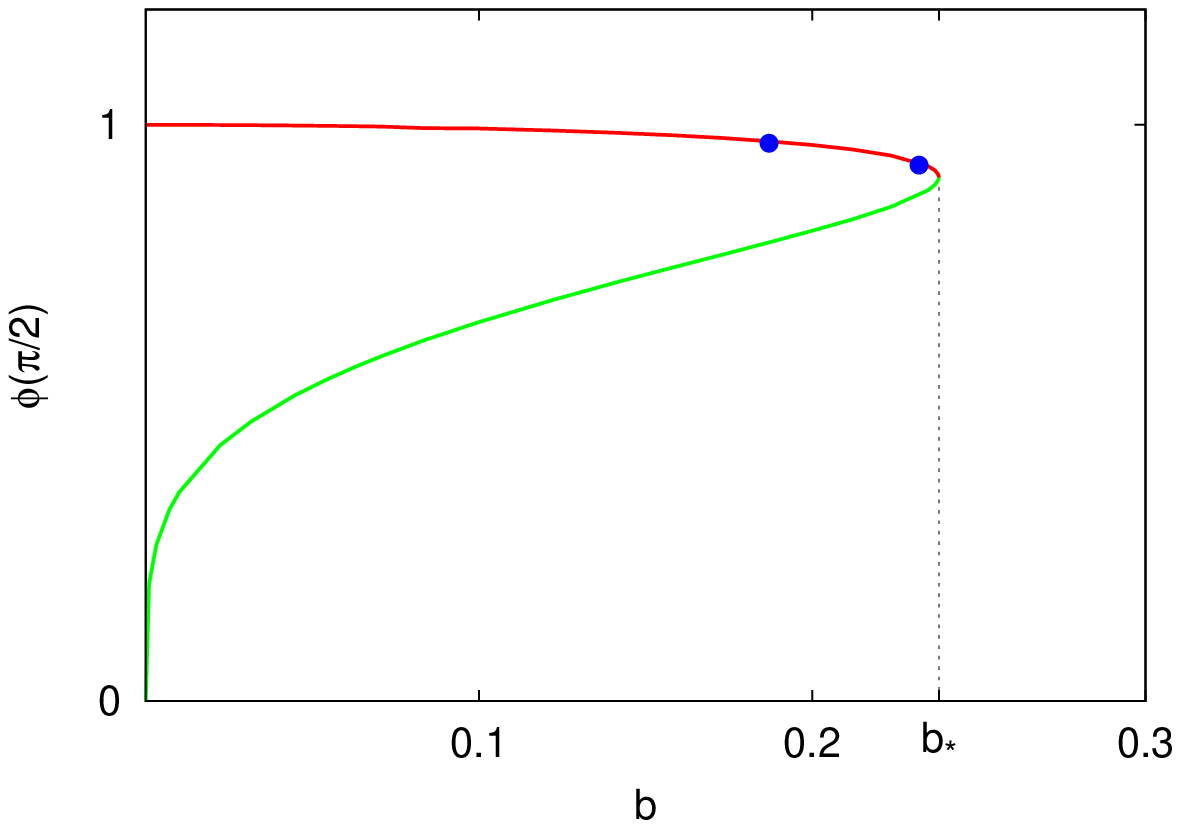}
%\hspace*{-0.2cm}
\includegraphics[angle=0,totalheight=6.2cm]{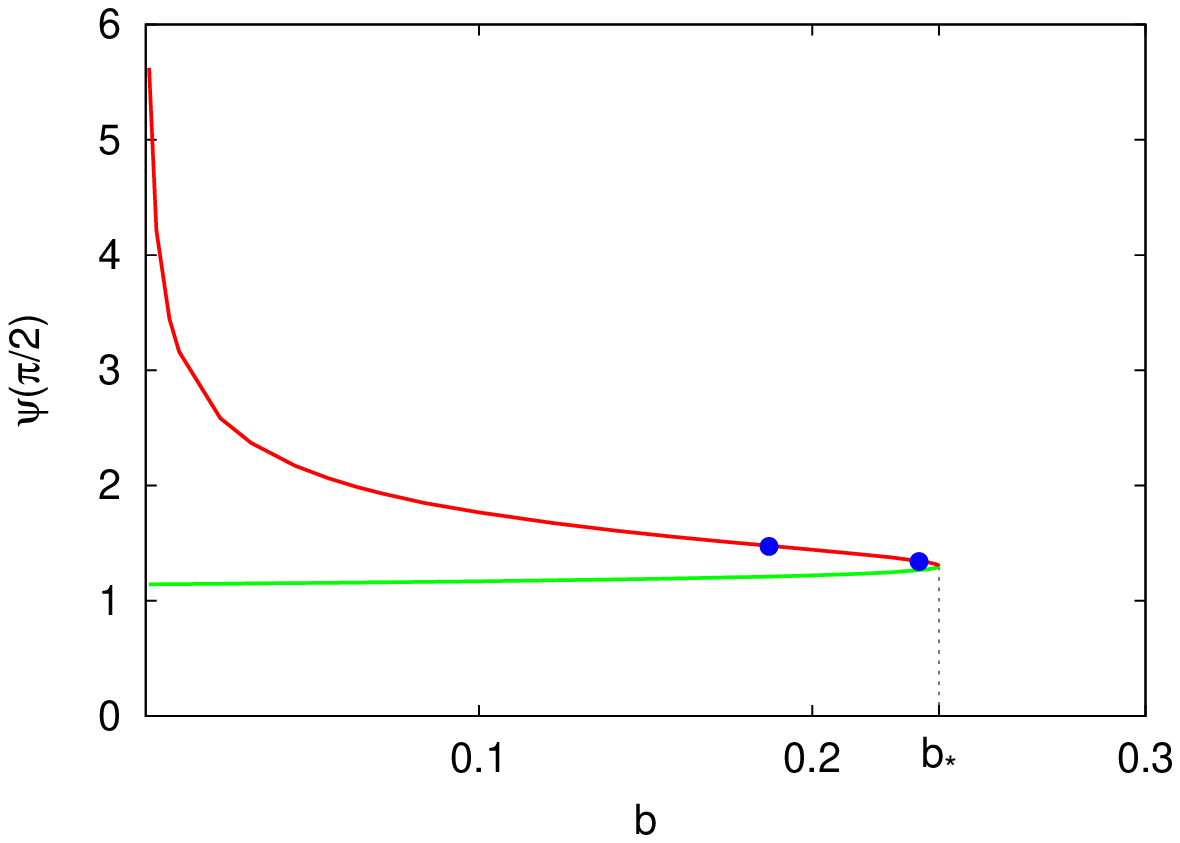}
\caption{The stable (green) and the unstable (red) axially
symmetric solutions of (\ref{phipsi}), (represented by their equatorial
values), and the bifurcation points of the secondary branches for the
sample period $T=5\pi$ (blue).}
\end{figure}

 As to obtaining the diagram on the left one first shows that, 
near a pole ($\theta = 0$) there is a 1-parameter family of analytic solutions of the form
$\phi(\theta) =  d + d(1 - d^4) \theta^2/16 + O(\theta^4)$.
One then ``shoots'' (numerically) such a local solution towards the equator and adjusts 
the parameter $d$ such that $\partial \phi/\partial \theta$ vanishes at the
equator ($\pi/2$).
This gives a stable branch emanating from $\phi \equiv 0$ (green in the online version), 
and an unstable branch (red) emanating from $\phi \equiv 1$ in consistency with Thm. \ref{bp}
and Proposition \ref{bstar}.
Moreover, numerics shows that the lowest eigenvalue $\varsigma$ of the linearisation ((\ref{LinBY}) with $A \equiv 0$) 
decreases monotonically from $0$ to $\varsigma = -\Lambda$ (cf. point 3. of Proposition
\ref{stest}) along the unstable branch. An analogous discussion 
in terms of $\psi$, which makes use of Proposition \ref{b0}, yields the diagram on the right.  

However, the ``principal'' branches displayed above cannot comprise all solutions - we recall from
Sect.3.3.1. that even in the special case $b = 0$ there is the symmetry-breaking Kottler family of data. 
We therefore now look for solutions of the form  $\phi(\alpha, \theta)$, i.e.
we still assume axial symmetry. For small $b$ one can in fact infer, via an implicit
function-argument, the existence of unstable ``secondary'' branches emanating from each Kottler solution.
To see where these branches end up, we rewrite an arbitrary eigenvalue
$\lambda$  of the linearised operator (\ref{LinBY}) at some solution $\phi(\theta)$ on the unstable principal branch 
in terms of the eigenvalue  $\lambda_{\theta}$ of the truncated operator ((\ref{LinBY}) for $A \equiv 0$). 
With a corresponding  separation of variables in the eigenfunction $\mu =  \mu(\alpha, \theta) = \mu_{\alpha}(\alpha)
\mu_{\theta}(\theta)$
we find
\begin{equation}
\label{lam}
\frac{\lambda}{\Lambda} = \left(\frac{2\pi}{T}\right)^2 j^2 +
\frac{\lambda_{\theta} } {\Lambda}  \qquad j= 0,1,2...
\end{equation}      
 Bifurcation theory (in essence the arguments of Proposition \ref{csym}) 
now shows that symmetry breaking can only occur at solutions where the linearised operator has a zero mode. 
Setting now  $\lambda = 0$ and
using the numerical observation that $\lambda_{\theta}$ changes
monotonically from $0$ at $b = b_*$ to some negative value (depending on $T$) at $b = 0$ we find 
that (\ref{lam}) has in fact $k$ solutions $j=1,2,....k$  when $T \in (2\pi k, 2\pi (k+1)]$
(Fig.1. shows the bifurcation points corresponding to $j=1$ and $j=2$ 
in the case $T = 5\pi$).
 Further numerical calculations (which are interesting
on their own and will be described in detail separately \cite{BMS})
now in fact reveal that from the bifurcation point labelled $j$  ($1\le j\le k$) on the 
unstable principal branch, there emanates a secondary branch of solutions of the form
$\phi_j(\alpha,\theta)$ which continues  till $b=0$ and terminates at the Kottler 
solution with precisely $j$ pairs of horizons (cosmological and black hole).

We remark that we cannot rule out the existence of non-axially symmetric
solutions. In particular, Thm. 3.1. of \cite{CG} has no obvious extension to the present case. 
However, the existence of such solutions is unlikely as we do not find corresponding bifurcation points on 
any of the known branches.
 
We summarize the above exposition in the following Conjecture. 
The status of point 3. is ``truly conjectural'' in the sense just mentioned, 
while  points 1. and 2. are ``facts''.
However, we wish to reserve the term ``theorem'' to a forthcoming
publication \cite{BMS} where we hope to present a complete analytic proof of
existence of the solutions, but in any case a detailed numerical analysis.   

\begin{conj}
The smooth, positive solutions of
the Lichnerowicz equation (\ref{LichBY}) have the following properties:
\begin{enumerate}
\item  There exists $b_*$ ($b_* \approx 0.238$) such that for $b \in (0,b_*)$ there
is precisely one stable solution and one unstable solution which only depend on $\theta$ 
and are equatorially symmetric (the stable and the unstable principal
branch).  Moreover, for $ b = b_*$, there is a unique marginally stable solution
with the same symmetries, while for $b > b_*$ there is no solution.
In the limit $b \rightarrow 0$, the stable solutions tend to zero like  $b^{1/4}$.
\item 
If $T \in (2\pi k, 2\pi (k+1)]$, there exist $k$ values $b_1$...$b_k$, with $b_* > b_1 > b_2 >...> b_k > 0$ such
that, from each point on the unstable principal branch corresponding to $b_j$,
there bifurcates a branch of unstable solutions depending only on $\alpha$ and $\theta$
(secondary branches). Each such branch continues till $b = 0$, where its end point represents the
Kottler solution with $j$ pairs of horizons. 
\item All solutions are axially symmetric (i.e. independent of $\varphi$)
\end{enumerate}      
\end{conj}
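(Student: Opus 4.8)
The plan is to build the global solution set from the local results of Section~2, handling the three assertions in turn and isolating the genuinely global statements, which I expect to require numerical continuation. For the principal branches in point~1, I would first work in the symmetry class $\phi=\phi(\theta)$ with $\phi(\theta)=\phi(\pi-\theta)$. This is forced on the stable branch: the seed data (\ref{Nar}) together with $\Omega^2 = 8b^2\Lambda\sin^2\theta$ is invariant under the continuous isometries generated by $\partial_\alpha$ and $\partial_\varphi$ and under the equatorial reflection $\theta\mapsto\pi-\theta$, so by Proposition~\ref{csym} and Corollary~\ref{dsym} the stable minimal solution of Theorem~\ref{bp} depends only on $\theta$ and is equatorially symmetric. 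The existence of $b_*$ and the ``two / one / none'' count are then exactly Theorem~\ref{bp}, while the $b^{1/4}$ decay of the stable branch as $b\to 0$ is Proposition~\ref{b0} read through $\phi=b^{1/4}\psi$. The unstable principal branch I would obtain by continuing the trivial solution $\phi\equiv 1$, available at $b=0$ since $R=2\Lambda$, by the implicit function theorem within the $\theta$-even class, where the linearisation $-(B+5/4)$ has trivial kernel; its strict instability at $b=0$ is point~3 of Proposition~\ref{stest} ($\wt\varsigma=-\Lambda$) and persists for small $b$ by its point~2.

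To close point~1 I would join the two branches by a single left-turning fold at $b_*$: the solution there is marginally stable by Theorem~\ref{bp}, so Proposition~\ref{bstar} supplies the Crandall--Rabinowitz turning point and the local count of exactly two solutions for $b$ just below $b_*$. That these local pieces exhaust the $\theta$-only solutions for every $b\in(0,b_*)$ --- the uniqueness claimed by ``precisely one stable and one unstable'' --- is where the shooting analysis enters: one parametrises the analytic solutions near a pole by the polar value $d$ and verifies numerically that the matching condition $\partial_\theta\phi|_{\pi/2}=0$ holds on exactly two curves meeting at $b_*$.

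For the secondary branches in point~2, the bifurcation points are located from the separation (\ref{lam}): a symmetry-breaking zero mode appears on the unstable principal branch precisely when $\lambda_\theta/\Lambda=-(2\pi j/T)^2$, and the numerically observed monotone sweep of $\lambda_\theta$ from $0$ at $b_*$ to a negative value at $b=0$ produces exactly one $b_j$ for each admissible $j$, hence $k$ of them when $T\in(2\pi k,2\pi(k+1)]$. At each $b_j$ I would apply a Crandall--Rabinowitz bifurcation, with kernel spanned by $\cos(2\pi j\alpha/T)\,\mu_\theta(\theta)$, to produce a local branch of genuinely $\alpha$-dependent solutions; near $b=0$ a separate implicit-function argument continues each Kottler solution with $j$ horizon pairs, non-degenerate in its own symmetry class, off $b=0$. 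Showing that these two ends lie on one and the same branch, carrying no further folds, is again a global matter that I would settle by numerical continuation.

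The main obstacle is point~3. Ruling out solutions that break axial symmetry is out of reach of the tools above: it would require controlling the full, $\varphi$-dependent spectrum of (\ref{LinBY}) along all branches and proving that no $\varphi$-dependent zero mode ever occurs, and the rigidity theorem Thm.~3.1 of \cite{CG} that would deliver this in the $\Omega=\mathrm{const.}$ case has no evident analogue here. Failing such a mechanism, the only available evidence is the negative observation that no bifurcation point with $\varphi$-dependent kernel is detected on any known branch, so point~3 must remain conjectural.
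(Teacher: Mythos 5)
Your proposal follows essentially the same route as the paper: Premoselli's theorem, Proposition~\ref{b0} via $\phi=b^{1/4}\psi$, the Crandall--Rabinowitz fold of Proposition~\ref{bstar}, and the polar shooting ansatz $\phi(\theta)=d+O(\theta^2)$ for the principal branches; the eigenvalue separation (\ref{lam}) with the numerically observed monotonicity of $\lambda_\theta$ to locate the $k$ secondary bifurcation points and connect them to the Kottler solutions; and the same admission that the global branch structure rests on numerical continuation while point~3 stays genuinely conjectural for lack of an analogue of Thm.~3.1 of \cite{CG}. This matches the paper's exposition both in substance and in which steps are rigorous versus numerical, so there is nothing to add.
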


We finally note that the maximal value $b_* \approx 0.238$ 
obtained from numerics (cf. Fig. 1.), corresponding to $\Lambda |J| \approx
0.159$, significantly exceeds the existence limit given in Thm. 3.1., 
while it comes remarkably close to the non-existence limit, Thm. 3.2..

\subsection{Kerr-de Sitter}

The Kerr-de Sitter (KdS) 4-metric reads, 
\begin{equation}
\label{KdS4}
\wt ds_4^2 =  \wt g_{\mu\nu}^K dx^{\mu}x^{\nu} =  - \frac{\sigma}{\rho^2} \left(dt - \frac{a \sin^2 \theta}{\kappa} d\varphi \right)^2 
+ \frac{\rho^2}{\sigma} dr^2 + \frac{\rho^2}{\chi}
d\theta^2 + \frac{\chi \sin^2 \theta}{\rho^2}  
 \left(a dt - \frac{r^2 + a^2}{\kappa} d\varphi \right)^2 
\end{equation}
in terms of ``Boyer-Lindquist''-coordinates with constants $m$, $a$, 
and $\kappa  =  1 + \Lambda a^2/3$, 
and in terms of the functions
\begin{equation}
\label{func}
 \sigma  =  (r^2 + a^2)(1 - \frac{\Lambda r^2}{3}) - 2mr,  \qquad \rho^2 = r^2 +
 a^2 \cos^2 \theta,  \qquad   \chi = 1 + \frac{\Lambda a^2 \cos^2 \theta}{3}
\end{equation}
where $\sigma$ generalises the synonymous function in 3.3.1.

The constants $m$ and $a$  satisfy bounds given by the extreme solutions, but also ``absolute'' 
bounds in terms of $\Lambda$ alone, cf. \cite{AM,CCK,CKN,GRS} for a discussion.

The twist scalar of (\ref{KdS4}) as defined after Equ (\ref{twi}) takes the form
\begin{equation}
\widetilde \omega_K = - 2 J_K \left(\cos^3 \theta - 3 \cos\theta - \frac{a^2 \cos\theta \sin^4 \theta}{\rho^2} \right) 
\end{equation}
where
\begin{equation}
\label{JKdS}
J_K = \frac{ma}{\kappa^2}
\end{equation}

is the angular momentum as defined in Sect. \ref{ang}.
The bounds on $m$ and $a$ entail a bound on $J_K$, again in terms of the
angular momenta $J_E$ of the 1-parameter family of the extreme solutions;
the latter satisfy the absolute bound $\Lambda J_E \le \Lambda J_{max} \approx 0.170$
saturated for one particular extreme solution.
Interestingly, this value exceeds the  non-existence bound of 
Thm. 3.2. (valid for TPSM solutions of (\ref{LichBY}) on the round 
hypertorus (\ref{Nar})).  

We now recall how (\ref{KdS4}) arises from $(t,\varphi)$-symmetric data.  
The slice $t=const$ is maximal with induced metric

\begin{equation}
\label{KdS3}
ds^2_K = g_{ij}^K dx^i dx^j = \rho^2 \left(\frac{dr^2}{\sigma} +
\frac{d\theta^2}{\chi} \right) +  
\frac{\sin^2 \theta}{\kappa^2 \rho^2} \left[\chi \left(r^2 + a^2 \right)^2 -
\sigma a^2 \sin^2 \theta \right] d\varphi^2.  
\end{equation}

 This metric still fits on a 3-manifold ${\cal M}$ of topology $\mathbb{S}^2 \times \mathbb{S}$ \cite{JC} 
which can be seen as in the Kottler case.
As before we restrict ourselves to the region $r \in [r_b, r_c]$ where $\sigma \ge 0$,
which is bounded by the black hole- and the cosmological horizon.
At $\sigma = 0$ the metric (\ref{KdS3}) is regularised by replacing $r$ by $\alpha$ 
defined via (\ref{alp}) but with $\sigma$ from (\ref{func}).  
 Thus the metric becomes periodic in $\alpha$ with period
\begin{equation}
P(\Lambda, m,a) := 2 \int_{r_b}^{r_c} \sigma^{-\frac{1}{2}} dr. 
\end{equation} 
As we are not interested in the complete set of solutions here, we take $P$ equal to the circumference 
$T$ of ${\cal M}$ which leaves us with just one pair of horizons. 

From Proposition \ref{tp2}, and using Remark 2. after this Proposition, we can now determine
the second fundamental form $ K_{ij}^K$ via $\omega_K = \wt \omega_K |_{\cal M}$  

With these preparations we now construct new data as follows.
\begin{defn}
\label{KKdS}
We set out from non-extreme KdS data $({\cal M}, g_{ij}^K, K_{ij}^K)$ as SM
with angular momentum $J_K$ and with $K_{ij}^K$ generated via (\ref{Kom}) from $\omega_K$. 
For some $J \in \mathbb{R}$ we define $\omega(J) = J \omega_K/J_K$ and
$K_{ij}(J)$ again via (\ref{Kom}).  
\end{defn}

It follows that $K_{ij}(J) =  K_{ij}^K J /J_K$ and 
$\Omega^2(J) = K_{ij}(J) K^{ij}(J) = J^2 \Omega_K^2/J_K^2$. 
Note that (\ref{JKdS}) does {\it not} hold for $J \neq J_K$.  

Applying Premoselli's theorem (Thm. \ref{bp}) and Propositions 1, 3 and 4 now yields 

\begin{thm}
\label{os}
In the setting described in Definition \ref{KKdS}.  we claim
\begin{enumerate}
\item  
There exists $J_* \ge J_K$ such that, for $J \in (0, J_*)$,  
Equ. (\ref{phi}) with $\Omega(J)$ has at least two positive
solutions, one of which is minimal and stable. Moreover, for $J = J_*$ there is a unique 
marginally stable solution, for $J$ slightly below $J_*$ there are precisely two
solutions, and for $J > J_*$ there is no solution.
In the limit $J \rightarrow 0$, the family of minimal, 
stable solutions tends to zero like $(\Lambda J)^{1/4}$.
\item
If 
\begin{equation}
\label{Omb}
\int_{\cal M} \Omega_K^2 dv < \Lambda .V_K 
\end{equation}
 for KdS data with $J_K$, $\Omega_K$ and volume $V_K$,  
it follows that $J_* > J_K$.
\item Inequality (\ref{Omb}) (and therefore the conclusion of point 2.) 
always holds for sufficiently small $J_K$.       
\end{enumerate}
\end{thm}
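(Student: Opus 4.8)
The plan is to read off the three assertions from Theorem~\ref{bp} together with Propositions~\ref{stest}, \ref{b0} and \ref{bstar}, once the bifurcation parameter is correctly identified. By Definition~\ref{KKdS} we have $\Omega(J)=(J/J_K)\,\Omega_K$, so $\Omega(J)$ is a fixed function times a quantity linear in $J$; writing $\Omega(J)=b\,\Omega_0$ with $b=3J\Lambda/2$ and $\Omega_0$ independent of $J$, the parameter $b$ (equivalently $J$) is exactly Premoselli's bifurcation parameter. First I would invoke Theorem~\ref{bp} to produce a threshold $J_*\in(0,\infty)$ such that (\ref{phi}) with $\Omega(J)$ has at least two positive solutions for $J<J_*$ --- one of them minimal and strictly stable --- a unique marginally stable solution at $J=J_*$, and no solution for $J>J_*$. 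The bound $J_*\ge J_K$ is then immediate: for $J=J_K$ one has $\omega(J_K)=\omega_K$ and $K_{ij}(J_K)=K_{ij}^K$, so the seed coincides with the genuine maximal KdS slice, which already solves the constraints; hence $\phi\equiv1$ solves (\ref{phi}) at $J=J_K$, and since Theorem~\ref{bp} excludes solutions for $J>J_*$, we must have $J_K\le J_*$. The rest of point~1 is read off from the auxiliary propositions: the rescaling $\psi=b^{-1/4}\phi$ of Proposition~\ref{b0} keeps $\psi$ regular and positive as $b\ra0$, whence $\phi\sim b^{1/4}\sim(\Lambda J)^{1/4}$ along the minimal branch, and Proposition~\ref{bstar}, applied to the marginally stable solution at $b_c=b_*$, produces the turning point and hence precisely two solutions for $J$ just below $J_*$.

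For point~2 the crux is to show that the solution $\phi\equiv1$ available at $J=J_K$ is \emph{strictly} unstable under hypothesis (\ref{Omb}). At $J=J_K$ the seed is itself the ID, so $\wt\Omega=\Omega_K$ and $\wt V=V_K$; point~1 of Proposition~\ref{stest} then reads $\int_{\cal M}\Omega_K^2\,dv\ge(\Lambda+\wt\varsigma)V_K$, and (\ref{Omb}) forces the lowest eigenvalue $\wt\varsigma<0$. I would then argue by contradiction: were $J_*=J_K$, Theorem~\ref{bp} would make the \emph{unique} solution at $J=J_*$ marginally stable, but that unique solution is the strictly unstable $\phi\equiv1$ --- a contradiction. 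Combined with $J_*\ge J_K$ from point~1, this gives $J_*>J_K$.

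For point~3 I would take, at fixed $\Lambda$ and fixed mass $m$, the slow-rotation limit $a\ra0$ of the non-extreme KdS seeds; then $J_K=ma/\kappa^2\ra0$, while the induced geometry (hence its circumference $T=P(\Lambda,m,a)$ and volume $V_K$) converges to that of the Kottler slice of mass $m$, which has positive volume. Since $\omega_K$ is linear in $J_K$, i.e. $\omega_K=J_K f$ with $f$ independent of $J_K$, formula (\ref{Om}) gives $\Omega_K^2=J_K^2\,|\nabla f|^2/(2\eta^2)$, so that $\int_{\cal M}\Omega_K^2\,dv=O(J_K^2)\ra0$ while $V_K$ stays bounded below. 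Hence (\ref{Omb}) holds once $J_K$ is small enough, and point~2 delivers $J_*>J_K$.

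The step I expect to be most delicate is point~2, namely making the contradiction fully airtight. One must verify that $\phi\equiv1$ is genuinely the object counted by Premoselli's \emph{uniqueness} statement at $J_*$ rather than an extra branch, and that the instability measured by $\wt\varsigma$ in Proposition~\ref{stest} is the same notion of (in)stability used in Theorem~\ref{bp}; this matching is clean here because at $\phi\equiv1$ the seed and the generated ID coincide, so no conformal rescaling intervenes (cf. Lemma~\ref{stabil}). A more robust alternative, should the bare contradiction feel too slick, is to track the lowest eigenvalue $\varsigma(J)$ along the branch through $\phi\equiv1$ and show that it rises from its negative value at $J_K$ to zero only at the turning point $J_*$. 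The remaining bookkeeping --- identifying $\Omega_0$, confirming that the KdS slice is maximal ID, and the elementary limit in point~3 --- is routine.
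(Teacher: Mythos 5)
Your proofs of points 1 and 2 follow the paper's route exactly: point 1 is read off from Theorem \ref{bp} together with Propositions \ref{b0} and \ref{bstar} (with $b=3J\Lambda/2$), the bound $J_*\ge J_K$ comes from the fact that $\phi\equiv 1$ already solves (\ref{phi}) at $J=J_K$ because the KdS slice is itself ID, and point 2 is the same indirect argument --- if $J_*=J_K$ then Premoselli's unique solution at $b_*$ is the marginally stable $\phi\equiv 1$, contradicting the strict instability $\wt\varsigma<0$ forced by (\ref{Omb}) via point 1 of Proposition \ref{stest}. The one place you diverge is point 3, and there your argument is both slightly imprecise and slightly weaker than the paper's. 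The claim that $\omega_K=J_K f$ with $f$ \emph{independent of} $J_K$ is not literally true: the bracket in $\wt\omega_K$ contains the term $a^2\cos\theta\sin^4\theta/\rho^2$, so $f$ depends on $a$ (and hence on $J_K$); what you actually need is that $\Omega_K^2/J_K^2$ stays \emph{bounded}, including on the axis where $\eta^2$ vanishes and must be cancelled by the $\sin^2\theta$ factors in $|\nabla\omega_K|^2$. More importantly, by fixing $m$ and sending $a\to 0$ you only obtain a threshold $\epsilon(m)$ depending on the family chosen, whereas the assertion ``(\ref{Omb}) always holds for sufficiently small $J_K$'' calls for a uniform threshold over all non-extreme KdS data. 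The paper gets this by passing to the dimensionless variables (\ref{resc}), writing $\bar\Omega_K=\bar J_K\,\bar\Omega_{K,0}$ with $\bar\Omega_{K,0}$ analytic on the compact parameter domain in $(\bar m,\bar a)$, and hence uniformly bounded; this yields the pointwise bound $\Omega_K^2<\Lambda$ (stronger than the integrated (\ref{Omb})) uniformly for small $\bar J_K=\Lambda J_K$. You would need to add this compactness/uniformity step to close point 3.
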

 
\begin{proof}
We first note that $K_{ij}(J)$ is smooth for all $J \in \mathbb{R}$.
Point 1. follows now trivially from the results stated before the
theorem and makes no direct reference to the properties of KdS.
We have stated this point explicitly to illustrate how 
 Thm. \ref{bp} allows to deduce the existence of a large family of
 solutions from  a single one. 
Regarding the case $J = 0$ we note that, apart from $\phi \equiv
 0$,  non-trivial solutions definitely exist as well, namely  solutions 
to the Yamabe problem \cite{LP}; 
however, we have no information about their multiplicity and propagation 
to $J \neq 0$ here.   
 To prove 2. we conclude indirectly: Assume that $J_*= J_K$ which means
 that, within the 1-parameter family of $K_{ij}(J)$-tensors generated 
from $\omega(J)$ with $J \in \mathbb{R}$ as in Definition \ref{KKdS}, 
the KdS tensor $K_{ij}^K$ had in fact the maximal angular momentum permitted by
Premoselli's theorem. Then 2. of that theorem would imply that the KdS data are marginally
 stable. However, (\ref{Omb}) together with Proposition \ref{stest} implies strict instability, 
a contradiction. (Note that this Proposition applies here directly since KdS are ID rather than just a SM). 

For the final point 3. it suffices to show that $\Omega_K^2 < \Lambda$ for
small $J_K$. This is intuitively clear as $\Omega_K^2$ is of order $J_K^2$ near 
$J_K = 0$. To see this in detail, it is useful to rescale all variables and constants to the dimensionless quantities 
\begin{equation}
\label{resc}
\bar r = \sqrt{\Lambda} r \qquad  \bar \theta = \theta \qquad \bar \varphi =
\varphi, \qquad \bar a = \sqrt{\Lambda} a  \qquad \bar m = \sqrt{\Lambda} m  \qquad \bar J_K = \Lambda J_K   
\end{equation}

In terms of these variables, the KdS metric and the terms characterising rotation take the forms 
\begin{equation}
ds^2_K  =  \Lambda^{-1} \bar ds^2_K \qquad \omega_K  =  \Lambda^{-1} \bar \omega_K
\qquad \Omega_K^2  =  \Lambda \bar \Omega_K^2
\end{equation}
where all quantities on the l.h.s. are functions of $\Lambda, m, a, r$ and $\theta$, while all
 quantities with bars can be written in terms
of the functions $\bar r, \bar \theta$ and constants $\bar m$ and $\bar a$ only (and hence do not depend explicitly on $\Lambda$). 

Using (\ref{Om}) we now observe that $\bar \Omega_K$ can be written as 
$\bar \Omega_K(\bar r, \bar \theta, \bar m, \bar a) =  \bar J_K \bar \Omega_{K,0}(\bar r, \bar \theta, \bar m, \bar a)$. 
We first show that $\bar \Omega_{K,0}^2$ is analytic in all arguments. 
This follows from the fact the denominator in (\ref{Om}) only vanishes on the axis where, 
however, it is ``regularized'' by the zeros of the numerator in the same manner as in the Bowen-York example,
and we still have $\bar \Omega_K^2 = f \sin^2 \bar \theta$ near the axis with
an analytic function $f$.
Next we recall that, for regular KdS data, $\bar m$ and $|\bar a|$ are bounded
from above (by a certain number). Therefore, $\bar \Omega_{K,0}^2(\bar r, \bar\theta, \bar m, \bar a)$,
being an analytic function on a compact domain, is bounded from above (by a number). 
 This implies that $\bar \Omega_K=  \bar J_K \bar \Omega_{K,0}$ can be made as small as 
needed by decreasing $\bar J_K$. 

We conclude that  the KdS metric can indeed be ``overspun'' in the sense
that, for small $J_K$, one can put more angular momentum than (\ref{JKdS}) on
the background (\ref{KdS3}). (We remark that our notion of ``overspinning'' has
nothing to do with attempts of exceeding the angular momentum limit 
of extreme Kerr black holes, cf. \cite{CB}). 
\end{proof}

We finish with two remarks.

\begin{description}

\item[ ``Conformally relaxed'' Kerr-de Sitter data.]
Premoselli's theorem and the previous one imply that for any KdS seed data with 
$J_K$ small enough, there exist stable, minimal data 
which are conformal to these KdS data. We call them ``conformally relaxed'',
(while the unstable KdS data themselves are ``conformally excited'').  
 In view of their stability, they will necessarily (by Proposition
 \ref{csym}) inherit the axisymmetry of the KdS seed, 
and they will have the same angular momentum by virtue of the conformal
 invariance of (\ref{am}).
These data will very likely be non-stationary - in any case, again due to
their stability, they cannot be a member of the KdS family with different
parameters, as follows from the proof of the above theorem. 
In case the data are really dynamic, it would be interesting to determine their
time evolution. Due to axisymmetry, this evolution would preserve 
the angular momentum. Therefore, it is not impossible that the resulting
spacetime settles down to the same member of the KdS family one started with.  

\item [Data with marginally outer trapped surfaces.]
Within the conformal method, there has been considered the problem of finding
ID which contain (marginally) outer trapped surfaces ((M)OTS). This was 
accomplished by imposing suitable boundary conditions on the SM, first in the  asymptotically flat context 
\cite{DAIN,DM}, but recently also for compact manifolds with boundary
\cite{HM,HT}. However, this work does not cover the Lichnerowicz equation 
with the present sign of the coefficient of $\phi^5$.  It would be
interesting to handle this case as well \cite{GT}. This seems to require 
combining the techniques of the aforementioned papers with those of Hebey et al. \cite{HPP}
 and/or  Premoselli \cite{BP}.
 
In the examples considered in this paper, both the Kottler as well as the
KdS family of data contain minimal surfaces. In the former case, the
minimal surfaces are expected to turn into MOTS when a 
small angular momentum is added. In the same manner, MOTS should arise
when the angular momentum of KdS is slightly reduced or increased beyond the
stationary value as described above. It would be interesting to settle this 
in the generic case.        
\end{description}
~\\
 {\large\bf Acknowledgements.}
We are grateful to Patryk Mach for verifying numerically point 3 in 
Conjecture 1, and to Lars Andersson, Robert Beig,  Piotr Chru\'sciel, Emmanuel
Hebey, Patryk Mach  and Bruno Premoselli for helpful discussions and correspondence.  
The research of P.B., and a visit of W.S. to Cracow, were supported  by the Polish National Science 
Centre grant no. DEC-2012/06/A/ST2/00397.
The research of W.S. was funded by the Austrian Science Fund (FWF): P23337-N16

\end{document}